\documentclass[journal,12pt,onecolumn,draftcls]{IEEEtran}
\IEEEoverridecommandlockouts

\usepackage{amsmath, amsthm, amssymb, amsfonts, mathtools, xargs, tensor, units, cite, stmaryrd, mathrsfs, algpseudocode, comment, graphicx}
\usepackage{multibib}
\usepackage[svgnames]{xcolor}
\usepackage[unicode=true, bookmarks=true, bookmarksnumbered=false, bookmarksopen=false, breaklinks=false, pdfborder={0 0 1}, backref=false, colorlinks=false, hidelinks]{hyperref}
\makeatletter
\let\NAT@parse\undefined
\makeatother

\newenvironment{manuallemma}[1]{%
  \manuallemmainner
}{\endmanuallemmainner}

\newenvironment{manualtheorem}[1]{%
  \manualtheoreminner
}{\endmanualtheoreminner}
\theoremstyle{definition}

\theoremstyle{definition}
\newtheorem{assumption}{Assumption}
\newtheorem{lemma}{Lemma}
\newtheorem{theorem}{Theorem}

\theoremstyle{remark}
\newtheorem{remark}{Remark}

\allowdisplaybreaks
	\title{\Large Safe Model-Based Reinforcement Learning for Systems with Parametric Uncertainties}
\author{\normalsize S M Nahid Mahmud$^{1}$ \and Scott A Nivison$^{2}$ \and Zachary I. Bell$^{2}$ \and Rushikesh Kamalapurkar$^{1}$
\thanks{*This research was supported, in part, by the Air Force Research Laboratories under award number FA8651-19-2-0009. Any opinions, findings, or recommendations in this article are those of the author(s), and do not necessarily reflect the views of the sponsoring agencies.}%
\thanks{$^{1}$School of Mechanical and Aerospace Engineering, Oklahoma State University, email: {\tt\footnotesize \{nahid.mahmud, abudia@okstate.edu, rushikesh.kamalapurkar\} @okstate.edu}.}%
\thanks{$^{2}$ Air Force Research Laboratories, Florida, USA, email: {
\tt\footnotesize \{scott.nivison, zachary.bell.10\}
 @us.af.mil.}}}
\begin{document}
\maketitle

\begin{abstract}
    Reinforcement learning has been established over the past decade as an effective tool to find optimal control policies for dynamical systems, with recent focus on approaches that guarantee safety during the learning and/or execution phases. In general, safety guarantees are critical in reinforcement learning when the system is safety-critical and/or task restarts are not practically feasible. In optimal control theory, safety requirements are often expressed in terms of state and/or control constraints. In recent years, reinforcement learning approaches that rely on persistent excitation have been combined with a barrier transformation to learn the optimal control policies under state constraints. To soften the excitation requirements, model-based reinforcement learning methods that rely on exact model knowledge have also been integrated with the barrier transformation framework. The objective of this paper is to develop safe reinforcement learning method for deterministic nonlinear systems, with parametric uncertainties in the model, to learn approximate constrained optimal policies without relying on stringent excitation conditions. To that end, a model-based reinforcement learning technique that utilizes a novel filtered concurrent learning method, along with a barrier transformation, is developed in this paper to realize simultaneous learning of unknown model parameters and approximate optimal state-constrained control policies for safety-critical systems. 
\end{abstract}

\section{Introduction}

Due to advantages such as repeatability, accuracy, and lack of physical fatigue, autonomous systems have been increasingly utilized to perform tasks that are dull, dirty, or dangerous. Autonomy in safety-critical applications such as autonomous driving and unmanned flight relies on the ability to synthesize safe controllers. To improve robustness to parametric uncertainties and changing objectives and models, autonomous systems also need the ability to simultaneously synthesize and execute control policies online and in real time. This paper concerns reinforcement learning (RL), which has been established as an effective tool for safe policy synthesis for both known and uncertain dynamical systems with finite state and action spaces (see, e.g., \cite{SCC.Sutton.Barto1998,SCC.Doya2000}).

RL typically requires a large number of iterations due to sample inefficiency (see, e.g., \cite{SCC.Sutton.Barto1998}). Sample efficiency in RL can be improved using model-based reinforcement learning (MBRL); however, MBRL methods are prone to failure due to inaccurate models (see, e.g., \cite{SCC.Kamalapurkar.Rosenfeld.ea2016, SCC.Kamalapurkar.Walters.ea2016, SCC.Kamalapurkar.Walters.ea2018}). Online MBRL methods that handle modeling uncertainties are motivated by complex tasks that require systems to operate in dynamic environments with changing objectives and system models, where accurate models of the system and environment are generally not available in due to sparsity of data. In the past, MBRL techniques under the umbrella of approximate dynamic programming (ADP) have been successfully utilized to solve reinforcement learning problems online with model uncertainty (see, e.g., \cite{SCC.Modares.Lewis.ea2013,SCC.Kiumarsi.Lewis.ea2014,SCC.Qin.Zhang.ea2014}). ADP utilizes parametric methods such as neural networks (NNs) to approximate the value function, and the system model online. By obtaining an approximation of both the value function and the system model, a stable closed loop adaptive control policy can be developed (see, e.g., \cite{SCC.Vamvoudakis.Vrabie.ea2009, SCC.Lewis.Vrabie2009,SCC.Bertsekas.ea2011,SCC.Bhasin.Kamalapurkar.ea2012,SCC.Liu.Wei2014}). 

Real-world optimal control applications typically include constraints on states and/or inputs that are critical for safety (see, e.g., \cite{SCC.He.Li.ea2017}). ADP was successfully extended to address input constrained control problems in \cite{SCC.Modares.Lewis.ea2013} and \cite{SCC.Vamvoudakis.Miranda.ea2016}. The state-constrained ADP problem was studied in the context of obstacle avoidance in \cite{SCC.Walters.Kamalapurkar.ea2015} and \cite{SCC.Deptula.ea2020}, where an additional term that penalizes proximity to obstacles was added to the cost function. Since the added proximity penalty in \cite{SCC.Walters.Kamalapurkar.ea2015} was finite, the ADP feedback could not guarantee obstacle avoidance, and an auxiliary controller was needed. In \cite{SCC.Deptula.ea2020}, a barrier-like function was used to ensure unbounded growth of the proximity penalty near the obstacle boundary. While this approach results in avoidance guarantees, it relies on the relatively strong assumptions that the value function is continuously differentiable over a compact set that contains the obstacles in spite of penalty-induced discontinuities in the cost function.

Control Barrier Function (CBF) is another approach to guarantee safety in safety-critical systems (see e.g., \cite{SCC.Ames.Xu.ea2017}), with recent applications to the safe reinforcement learning problems (see e.g., \cite{SCC.Choi.Castaneda.ea2020,SCC.Cohen.ea2020,SCC.Marvi.Kiumarsi.ea2020}). \cite{SCC.Choi.Castaneda.ea2020} have addressed the issue of model uncertainty in safety-critical control with an RL-based data-driven approach. A drawback of this approach is that it requires a nominal controller that keeps the system stable during the learning phase, which may not be always possible to design. In \cite{SCC.Marvi.Kiumarsi.ea2020}, the authors proposes a safe off-policy RL scheme which trades-off between safety and performance. In \cite{SCC.Cohen.ea2020} the authors proposes a safe RL scheme in which the proximity penalty approach from \cite{SCC.Deptula.ea2020} is cast into the framework of CBFs. While the control barrier function results in safety guarantees, the existence of a smooth value function, in spite of a nonsmooth cost function, needs to be assumed. Furthermore, to facilitate parametric approximation of the value function, the existence of a forward invariant compact set in the interior of the safe set needs to be established. Since the invariant set needs to be in the interior of the safe set, the penalty becomes superfluous, and safety can be achieved through conventional Lyapunov methods.

This paper is inspired by a safe reinforcement learning technique, recently developed in \cite{SCC.Yang.Vamvoudakis.ea2019}, based on the idea of transforming a state and input constrained nonlinear optimal control problem into an unconstrained one with a type of saturation function, introduced in \cite{SCC.Graichen.Petit.ea2009, SCC.Bechlioulis.Rovithakis.ea2009}. In \cite{SCC.Yang.Vamvoudakis.ea2019}, the state constrained optimal control problem is transformed using a barrier transformation (BT), into an equivalent, unconstrained optimal control problem. Later, a learning technique is used to synthesize the feedback control policy for this unconstrained optimal control problem. The controller for the original system is then derived from the unconstrained approximate optimal policy by inverting the barrier transformation. In \cite{SCC.Greene.Deptula.ea2020}, the restrictive persistence of excitation requirement in \cite{SCC.Yang.Vamvoudakis.ea2019} is softened using model-based reinforcement learning (MBRL), where exact knowledge of the system dynamics is utilized in the barrier transformation.

One of the primary contributions of this paper is a detailed analysis of the connection between the transformed dynamics and the original dynamics, which is missing from results such as \cite{SCC.Yang.Vamvoudakis.ea2019}, \cite{SCC.Greene.Deptula.ea2020}, and \cite{SCC.Yang.Ding.ea2020}. While the stability of the transformed dynamics under the designed controllers is established in results such as \cite{SCC.Yang.Vamvoudakis.ea2019}, \cite{SCC.Greene.Deptula.ea2020}, and \cite{SCC.Yang.Ding.ea2020}, the implications of the behavior of the transformed system on the original system are not examined. In this paper, it is shown that the trajectories of the original system are related to the trajectories of the transformed system via the barrier transformation as long as the trajectories of the transformed system remain bounded.

While the transformation in \cite{SCC.Yang.Vamvoudakis.ea2019} and \cite{SCC.Greene.Deptula.ea2020} results in verifiable safe controllers, it requires exact knowledge of the system model, which is often difficult to obtain. Another primary contribution of this paper is the development of a novel filtered concurrent learning technique for online model learning and its integration with the barrier transformation method to yield a novel MBRL solution to the online state-constrained optimal control problem under parametric uncertainty. The developed MBRL method learns an approximate optimal control policy in the presence of parametric uncertainties for safety critical systems while maintaining stability and safety during the learning phase. The inclusion of filtered concurrent learning makes the controller robust to modeling errors and guarantees local stability under a \emph{finite} (as opposed to \emph{persistent}) excitation condition. 

In the following, the problem is formulated in Section \ref{problem formulation} and the BT is described and analyzed in Section \ref{sec:BT}. A novel parameter estimation technique is detailed in Section \ref{para} and a model-based reinforcement learning technique for synthesizing feedback control policy in the transformed coordinates is developed in Section \ref{Model-Based Reinforcement Learning}. In Section \ref{Stability Analysis}, a Lypaunov-based analysis is utilized to establish practical stability of the closed-loop system resulting from the developed MBRL technique in the transformed coordinates, which guarantees that the safety requirements are satisfied in the original coordinates. Simulation results in Section \ref{Simulation} demonstrate the performance of the developed method and analyze its sensitivity to various design parameters, followed by a comparison of the performance of the developed MBRL approach to an offline pseudospectral optimal control method. Strengths and limitations of the developed method are discussed in Section \ref{Conclusion}, along with possible extensions.

\section{Problem Formulation} \label{problem formulation} 
\subsection{Control objective}\label{control object} 
Consider a continuous-time affine nonlinear dynamical system 
\begin{equation}
        \dot x = f(x)\theta+g(x)u, \label{eq:Dynamics}
\end{equation}
where $x= [x_{1};\hdots;x_{n}] \in \mathbb{R}^{n}$ is the system state, $\theta \in \mathbb{R}^{p}$ are the unknown parameters, $u \in \mathbb{R}^{q}$ is the control input, and the functions $f:\mathbb{R}^{n} \rightarrow \mathbb{R}^{n \times p}$ and $g: \mathbb{R}^{n} \rightarrow \mathbb{R}^{n \times q}$ are known, locally Lipschitz functions.In the following, $[a;b]$ denotes the vector $[a\,\,b]^T$ and $(v)_i$ denotes the $i$th component of the vector $v$. \\ The objective is to design a controller $u$ for the system in \eqref{eq:Dynamics} such that starting from a given feasible initial condition $x^{0}$, the trajectories $x(\cdot)$ decay to the origin
and satisfy $ x_i(t) \in (a_i,A_i),\forall t\geq 0$, where $ i = 1,2,\hdots,n $ and $ a_{i} < 0 < A_{i} $. While MBRL methods such as those detailed in \cite{SCC.Kamalapurkar.Walters.ea2018} guarantee stability of the closed-loop with state constraints are typically difficult to establish without extensive trial and error. In the following, a BT is used to guarantee state constraints. 

\section{Barrier Transformation}\label{sec:BT}
\subsection{Design}
Let the function $b : \mathbb{R} \rightarrow \mathbb{R}$, referred to as the barrier function (BF), be defined as  
\begin{equation}\label{BF1}
    b_{(a_{i},A_{i})}(y) \coloneqq \log \frac{A_{i}(a_{i}- y)}{a_{i}(A_{i}-y)}, \quad \forall i = 1,2, \hdots, n.
\end{equation}
Define $b_{(a,A)}: \mathbb{R}^{n} \rightarrow \mathbb{R}^{n}$ as  $b_{(a,A)} (x) \coloneqq [b_{(a_{1},A_{1})}((x)_{1}); \hdots ; b_{(a_{n},A_{n})}((x)_{n})]$ with $a = [a_{1}; \hdots ;a_{n}]$ and $A = [A_{1}; \hdots ;A_{n}]$. Moreover, the inverse of \eqref{BF1} on the interval $(a_{i},A_{i})$, is given by
\begin{equation}\label{IBF}
b^{-1}_{(a_{i},A_{i})}(y) = a_{i}A_{i}\frac{e^{y} - 1}{a_{i}e^{y} - A_i}. 
\end{equation}
Define $b^{-1}_{(a,A)}: \mathbb{R}^{n} \rightarrow \mathbb{R}^{n}$ as  $b^{-1}_{(a,A)} (s) \coloneqq [b^{-1}_{(a_{1},A_{1})}((s)_{1}); \hdots ; b^{-1}_{(a_{n},A_{n})}((s)_{n})]$. Taking the derivative of \eqref{IBF} with respect to $y$ yields
\begin{equation} \label{BFT}
    \frac{\mathrm{d}b^{-1}_{(a_{i},A_{i})}(y)}{\mathrm{d}y} = \frac{1}{B_i(y)}, \quad \text{where} \quad B_i(y) \coloneqq \frac{a_i^{2} e^{y} - 2a_i A_i + A_{i}^{2} e^{-y}}{A_i a_i^{2} - a_i A_i^{2}}.
\end{equation}
Consider the BF based state transformation  
\begin{equation} 
s_{i} \coloneqq b_{(a_{i},A_{i})}(x_{i}), \quad 
x_{i} = b^{-1}_{(a_{i},A_{i})}(s_{i}),
\end{equation}
where $s\coloneqq \left[s_1,\cdots,s_n\right]$ denotes the transformed state. In the following derivation, whenever clear from the context, the subscripts $a_{i}$ and $A_{i}$ of the BF and its inverse are suppressed for brevity. The time derivative of the transformed state can be computed using the chain rule as $ \dot {s}_{i} = B_i(s_i)\dot x_{i}$
which yields the transformed dynamics
\begin{equation}
    \dot s_{i}  = B_i(s_i)\left(f(x)\theta + g(x)u\right)_{i}. 
\end{equation}

The dynamics of the transformed state can then be expressed as
\begin{equation}\label{eq:BTDynamics}
    \dot s = F(s) + G(s)u,       
\end{equation}
where $F(s) \coloneqq y(s) \theta$, $ \left(y(s)\right)_i := B_i(s_i)\left(f\left(b^{-1}(s)\right)\right)_i
\in \mathbb{R}^{1 \times p}$, and $ \left(G(s)\right)_i \coloneqq B_i(s_i)\left(g\left(b^{-1}(s)\right)\right)_i \in \mathbb{R}^{ 1 \times q}$.

Continuous differentiability of $b^{-1}$ implies that $F$ and $G$ are locally Lipschitz continuous. Furthermore, $f(0) = 0$ along with the fact that $b^{-1}(0) = 0$ implies that $F(0) = 0$. As a result, for all compact sets $\Omega\subset\mathbb{R}^{n}$ containing the origin, $G$ is bounded on $\Omega$ and there exists a positive constant $L_y$ such that $\forall s\in \Omega$, $\|y(s)\| \leq L_{y} \|s\| $. The following section relates the solutions of the original system to the solutions of the transformed system. 

\subsection{Analysis}
In the following lemma, the trajectories of the original system and the transformed system are shown to be related by the barrier transformation provided the trajectories of the transformed system are \emph{complete} (see, e.g., page 33 of \cite{SCC.Sanfelice2021}). The completeness condition is not vacuous, it is not difficult to construct a system where the transformed trajectories escape to infinity in finite time, while the original trajectories are complete. For example, consider the system $ \dot{x} = x + x^2u $ with $x\in\mathbb{R}$ and $ u\in\mathbb{R} $. All nonzero solutions of the corresponding transformed system $ \dot{s} = B_1(s)b^{-1}_{(-0.5,0.5)}(s) + B_1(s) \left(b^{-1}_{(-0.5,0.5)}(s)\right)^2 u $ under the feedback $ \zeta(s,t) = -b^{-1}_{(-0.5,0.5)}(s) $ escape in finite time. However all nonzero solutions of the original system under the feedback $\xi(x,t) = \zeta(b_{(-0.5,0.5)}(x),t) = -x $ converge to either $ -1 $ or $ 1 $.
\begin{lemma}\label{lem:trajectoryRelation}
    If $t \mapsto \Phi\big(t,b(x^{0}),\zeta\big)$ is a complete Carath\'{e}odory solution to \eqref{eq:BTDynamics}, starting from the initial condition $b(x^{0})$, under the feedback policy $(s,t) \mapsto \zeta (s,t)$ and $t \mapsto \Lambda(t,x^{0},\xi)$ is a Carath\'{e}odory solution to \eqref{eq:Dynamics}, starting from the initial condition $x^{0}$, under the feedback policy $(x,t) \mapsto \xi(x,t)$, defined as $\xi(x,t) = \zeta(b(x),t)$, then $\Lambda(\cdot,x^{0}, \xi)$ is complete and $ \Lambda(t,x^{0}, \xi) = b^{-1}\left(\Phi(t,b(x^{0}),\zeta)\right) $ for all $t \in \mathbb{R}_{\geq 0}$.
\end{lemma}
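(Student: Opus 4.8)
The plan is to turn the complete transformed trajectory into a trajectory of the original system by applying $b^{-1}$, verify that it solves \eqref{eq:Dynamics} under $\xi$, and then use uniqueness together with maximality of $\Lambda$ to identify the two. Before that I would record a few pointwise facts. Since $a_i<0<A_i$, the denominator $a_ie^{y}-A_i$ in \eqref{IBF} is strictly negative for every $y\in\mathbb{R}$, so each $b^{-1}_{(a_i,A_i)}$ is $C^{\infty}$ on all of $\mathbb{R}$, is strictly increasing with limits $a_i$ and $A_i$ at $\mp\infty$, hence is a bijection of $\mathbb{R}$ onto $(a_i,A_i)$ that inverts $b_{(a_i,A_i)}$ on that interval; correspondingly the numerator and denominator defining $B_i(y)$ in \eqref{BFT} are strictly positive, so $B_i(y)>0$ and $\mathrm{d}b^{-1}_{(a_i,A_i)}/\mathrm{d}y=1/B_i(y)$ everywhere. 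I would also observe that $\xi(x,t)=\zeta(b(x),t)$, and hence the closed loop \eqref{eq:Dynamics} under $\xi$, is only defined for $x$ in the open box $\mathcal{B}\coloneqq\prod_{i=1}^{n}(a_i,A_i)$; consequently any Carath\'{e}odory solution $\Lambda(\cdot,x^{0},\xi)$ takes values in $\mathcal{B}$, and $b^{-1}$ maps $\mathbb{R}^{n}$ into $\mathcal{B}$.

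Next I would set $\tilde x(t)\coloneqq b^{-1}\big(\Phi(t,b(x^{0}),\zeta)\big)$ for all $t\in\mathbb{R}_{\ge 0}$. This is well defined because $\Phi(\cdot,b(x^{0}),\zeta)$ is complete, it takes values in $\mathcal{B}$, and it is locally absolutely continuous because $\Phi$ is locally absolutely continuous while $b^{-1}$ is $C^{1}$ and hence Lipschitz on the (compact) image of $\Phi$ over any finite interval. Feasibility of $x^{0}$ gives $\tilde x(0)=b^{-1}(b(x^{0}))=x^{0}$. For the dynamics I would differentiate componentwise for a.e.\ $t$, using the chain rule and \eqref{BFT},
\[
\dot{\tilde x}_i(t)=\frac{1}{B_i\big(\Phi_i(t)\big)}\,\dot\Phi_i(t)=\frac{1}{B_i\big(\Phi_i(t)\big)}\Big(F(\Phi(t))+G(\Phi(t))\,\zeta(\Phi(t),t)\Big)_i,
\]
then substitute the definitions $(F(s))_i=B_i(s_i)\big(f(b^{-1}(s))\big)_i\theta$ and $(G(s))_i=B_i(s_i)\big(g(b^{-1}(s))\big)_i$ following \eqref{eq:BTDynamics}, cancel the common factor $B_i(\Phi_i(t))>0$, and use $b^{-1}(\Phi(t))=\tilde x(t)$ together with $b(\tilde x(t))=\Phi(t)$ to obtain $\dot{\tilde x}_i(t)=\big(f(\tilde x(t))\theta\big)_i+\big(g(\tilde x(t))\big)_i\zeta(b(\tilde x(t)),t)=\big(f(\tilde x(t))\theta+g(\tilde x(t))\,\xi(\tilde x(t),t)\big)_i$. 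Thus $\tilde x$ is a complete Carath\'{e}odory solution of \eqref{eq:Dynamics} under $\xi$ emanating from $x^{0}$.

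To conclude, I would invoke uniqueness. Because $f$ and $g$ are locally Lipschitz, $b$ is smooth on $\mathcal{B}$, and the feedback $\zeta$ is (as for the reinforcement-learning policies employed in this paper) locally Lipschitz in its state argument, the closed-loop vector field $x\mapsto f(x)\theta+g(x)\zeta(b(x),t)$ is locally Lipschitz on $\mathcal{B}$, so Carath\'{e}odory solutions of \eqref{eq:Dynamics} under $\xi$ from a fixed initial condition are unique. Hence $\Lambda(\cdot,x^{0},\xi)$ coincides with $\tilde x$ on the domain of $\Lambda$; but $\tilde x$ is itself a solution defined on all of $\mathbb{R}_{\ge 0}$ that restricts to $\Lambda$, so maximality of $\Lambda$ forces its domain to be $\mathbb{R}_{\ge 0}$, i.e., $\Lambda(\cdot,x^{0},\xi)$ is complete and $\Lambda(t,x^{0},\xi)=\tilde x(t)=b^{-1}\big(\Phi(t,b(x^{0}),\zeta)\big)$ for all $t\ge 0$. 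I expect the principal difficulty to be bookkeeping rather than any single estimate: one must keep straight which trajectory lives in which set (so that $b\circ b^{-1}$ and $b^{-1}\circ b$ may be cancelled without ambiguity), note that completeness of $\Phi$ is exactly what makes $\tilde x$ globally defined — the example preceding the lemma shows this hypothesis cannot be dropped — and observe that identifying $\Lambda$ with $b^{-1}\circ\Phi$ rests on a uniqueness property, which in turn needs the (standing) regularity of $\zeta$.
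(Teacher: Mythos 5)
Your proposal is correct and follows essentially the same route as the paper's proof: define $\tilde{x} = b^{-1}\circ\Phi$, verify via the chain rule and the definitions of $F$, $G$, and $\xi$ that it is a complete Carath\'{e}odory solution of \eqref{eq:Dynamics} from $x^{0}$, and then identify it with $\Lambda$ by uniqueness of solutions. Your added care about where each trajectory lives and your explicit observation that the uniqueness step also requires regularity of $\zeta$ in its state argument (which the paper leaves implicit) are welcome but do not change the argument.
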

\begin{proof}
See Lemma \ref{lem:trajectoryRelation} in the Appendix.
\end{proof}

Note that the feedback $\xi$ is well-defined at $x$ only if $b(x)$ is well-defined, which is the case whenever $x$ is inside the barrier. As such, the main conclusion of the lemma also implies that $\Lambda(\cdot,x^0,\xi)$ remains inside the barrier. It is thus inferred from Lemma \ref{lem:trajectoryRelation} that if the trajectories of \eqref{eq:BTDynamics} are bounded and decay to a neighborhood of the origin under a feedback policy $(s,t) \mapsto \zeta (s,t)$, then the feedback policy $(x,t) \mapsto \zeta \big(b(x),t \big)$, when applied to the original system in \eqref{eq:Dynamics}, achieves the control objective stated in section \eqref{control object}.

To achieve BT MBRL in the presense of parametric uncertainties, the following section develops a novel parameter estimator. 

\section{Parameter Estimation}\label{para}
The following parameter estimator design is motivated by the subsequent Lyapunov analysis, and is inspired by the finite-time estimator in \cite{SCC.Adetola.Guay2008} and the filtered concurrent learning (FCL) method in \cite{SCC.Roy.Bhasin.ea2016}. Estimates of the unknown parameters, $\hat{\theta} \in \mathbb{R}^{p}$, are generated using the filter
\begin{equation}\label{regressor_Y}
\dot{Y} = \begin{cases}y(s), & \left\Vert Y_{f} \right\Vert \leq \overline{Y_{f}},\\ 0, & \text{otherwise}, \end{cases}\quad Y(0) = 0,   
\end{equation}
\begin{equation}\label{regressor_Y_f}
\dot{Y}_{f} = \begin{cases}Y^{T}Y, & \left\Vert Y_f \right\Vert \leq \overline{Y_f},\\ 0, & \text{otherwise}, \end{cases} \quad Y_{f}(0) = 0,   
\end{equation}
\begin{equation}\label{regressor_G_f}
\dot{G}_{f} = \begin{cases}G(s)u, & \left\Vert Y_f \right\Vert \leq \overline{Y_f},\\ 0,&\text{otherwise},\end{cases}, \quad G_{f}(0) = 0,
\end{equation}
\begin{equation}\label{regressor_X_f}
\dot{X}_{f} = \begin{cases}Y^{T}(s-s^{0}-G_{f}), & \left\Vert Y_f \right\Vert \leq \overline{Y_f},\\ 0,&\text{otherwise},\end{cases} \quad X_{f}(0)=0, 
\end{equation}
where $s^{0} = \left[b\left(x^{0}_{1}\right);\hdots;b\left(x^{0}_{n}\right)\right]$, and the update law 
\begin{equation}\label{theta_update}
\dot{\hat{\theta}} = \beta_{1}Y_{f}^T(X_{f}-Y_{f}{\hat{\theta}}), \quad  \hat{\theta}(0) = \theta^{0},
\end{equation}
where $ \beta_{1}$ is a symmetric positive definite gain matrix and $\overline{Y_f}$ is a tunable upper bound on the filtered regressor $Y_f$. 

Equations \eqref{eq:BTDynamics} - \eqref{theta_update} constitute a nonsmooth system of differential equations
\begin{equation}\label{z_dot}
    \dot{z} = h(z,u) = \begin{cases}h_{1}(z,u), & \left\Vert Y_f \right\Vert \leq \overline{Y_f},\\ h_{2}(z,u),&\text{otherwise},\end{cases} 
\end{equation}
where $z = [s; \ \text{vec}(Y); \ \text{vec}(Y_{f}); \ G_{f}; \ X_{f}; \ \hat{\theta}]$, $h_{1}(z,u) = [F(s)+G(s)u; \ \text{vec}(y(s)); \text{vec}(Y^{T}Y); \\ G(s)u; Y^{T}(s-s^{o}-G_{f}); \ \beta_{1}Y_{f}^{T}(X_{f}-Y_{f}\hat{\theta})]$,  and $h_{2}(z,u) = [F(s)+G(s)u; \ 0; \ 0; \ 0; \ 0; \ \beta_{1}Y_{f}^{T}(X_{f}-Y_{f}\hat{\theta})]$. Since $\|Y_{f}\|$ is non-decreasing in time, it can be shown that \eqref{z_dot} admits Carath\'{e}odory solutions.

\begin{lemma}\label{existence_Caratheodory}
    If $\|Y_{f}\|$ is non-decreasing in time then \eqref{z_dot} admits Carath\'{e}odory solutions.
\end{lemma}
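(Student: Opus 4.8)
The plan is to use the monotonicity of $\|Y_f\|$ to show that along any trajectory the piecewise right-hand side $h(\cdot,u)$ in \eqref{z_dot} switches branches at most once, so that building a solution reduces to gluing two continuous-in-$z$ ODEs at a single instant and invoking standard existence theory on each piece. The only discontinuity of $z\mapsto h(z,u)$ is across the surface $\Sigma:=\{z:\|Y_f\|=\overline{Y_f}\}$; on $\{\|Y_f\|<\overline{Y_f}\}$ it agrees with $h_1$ and on $\{\|Y_f\|>\overline{Y_f}\}$ with $h_2$, and both $h_1$ and $h_2$ are continuous (locally Lipschitz, in fact, since $b^{-1}$ is continuously differentiable and $f,g$ are locally Lipschitz, hence so are $F$, $G$, and $y$).

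First I would take $z_1$ to be a maximal Carath\'{e}odory solution of $\dot z=h_1(z,u)$ with the prescribed initial condition, on some interval $[0,\tau_1)$; this exists by the standard Peano/Carath\'{e}odory theorem because $h_1$ is continuous in $z$. Along $z_1$ the hypothesis says $t\mapsto\|Y_f(t)\|$ is non-decreasing, and it is absolutely continuous with $\|Y_f(0)\|=0<\overline{Y_f}$. Define $T^*:=\sup\{t\in[0,\tau_1):\|Y_f(t)\|\le\overline{Y_f}\}$. If $T^*=\tau_1$, then $z_1$ satisfies the $h_1$-branch of \eqref{z_dot} throughout $[0,\tau_1)$ and is itself the desired solution. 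Otherwise $T^*<\tau_1$, and monotonicity together with continuity forces $\|Y_f(t)\|<\overline{Y_f}$ for $t\in[0,T^*)$ and $\|Y_f(T^*)\|=\overline{Y_f}$. I would then set $z^*:=z_1(T^*)$ and let $z_2$ be a maximal solution of $\dot z=h_2(z,u)$ with $z_2(T^*)=z^*$, again existing since $h_2$ is continuous; note that $h_2$ has $\dot Y=\dot Y_f=\dot G_f=\dot X_f=0$, so along $z_2$ the regressor states remain frozen at their time-$T^*$ values, and in particular $\|Y_f\|\equiv\overline{Y_f}$. The candidate solution is $z:=z_1$ on $[0,T^*]$ concatenated with $z_2$ on $(T^*,\tau_2)$, which is absolutely continuous since the pieces agree at $T^*$; on $(0,T^*)$ it satisfies $\dot z=h_1(z,u)=h(z,u)$ because $\|Y_f\|<\overline{Y_f}$ there, and on $(T^*,\tau_2)$ it follows $h_2$ with the regressor frozen, so $\dot z(t)=h(z(t),u(t))$ holds for a.e.\ $t$.

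The hard part will be the behaviour exactly at the switching instant $T^*$: once the $h_1$-flow would push $\|Y_f\|$ strictly past $\overline{Y_f}$ the dynamics must jump to $h_2$, which then freezes the regressor precisely on $\Sigma$, so one cannot expect a classical solution through $\Sigma$ and must genuinely work at the level of Carath\'{e}odory solutions—arguing that the single instant $\{T^*\}$ has measure zero and that beyond it the trajectory stays on $\Sigma$ with the latched, regressor-frozen evolution used in the construction. This is exactly where the non-decreasing property of $\|Y_f\|$ is indispensable: it guarantees $\Sigma$ is reached at most once and never re-entered, ruling out the chattering that would otherwise obstruct Carath\'{e}odory existence. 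I would also remark that completeness is not claimed; indeed the $h_2$-piece, driven by $\dot s=F(s)+G(s)u$, may escape in finite time, exactly as in the example preceding Lemma~\ref{lem:trajectoryRelation}, so only existence of a solution on a maximal interval is asserted.
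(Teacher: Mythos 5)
Your proposal is correct and follows essentially the same route as the paper's proof: flow with the continuous branch $h_1$ until $\|Y_f\|$ first saturates at $\overline{Y_f}$, then switch to $h_2$ and concatenate the two pieces, using the monotonicity of $t\mapsto\|Y_f(t)\|$ to guarantee a single switching instant and hence no chattering. Your explicit remarks on the sticky behaviour at the switching surface and on the absence of a completeness claim go slightly beyond what the paper writes, but the underlying construction is identical.
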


\begin{proof}
see Lemma \ref{existence_Caratheodory} in Appendix.
\end{proof}

Note that \eqref{regressor_Y_f}, expressed in the integral form 
\begin{equation}\label{integral_Y_f}
    {Y}_{f}(t) = \int_{0}^{t_{3}}Y^{T}(\tau)Y(\tau)d\tau, 
\end{equation}
where $t_{3} \coloneqq \underset{t} {\inf}\{t \geq 0 \quad | \quad \|Y_{f}(t)\| \leq \overline{Y_{f}}\}$, along with \eqref{regressor_X_f}, expressed in the integral form 
\begin{equation}\label{integral_X_f}
    {X}_{f}(t) = \int_{0}^{t_{3}}Y^{T}(\tau)\left(s(\tau)-s^{0}-G_{f}(\tau)\right)d\tau,
    \end{equation}
and the fact that $s(\tau) - s^0 - G_f(\tau) = Y(\tau) \theta$, can be used to conclude that ${X}_{f}(t) = Y_{f}(t)\theta $, for all $t\geq 0$. As a result, a measure for the parameter estimation error can be obtained using known quantities as $ Y_{f}\tilde{\theta} =  {X}_{f} - Y_{f}\hat{\theta} $, where $\tilde{\theta} := \theta - \hat{\theta}.$ The dynamics of the parameter estimation error can then be expressed as 
\begin{equation}
    \dot {\tilde{\theta}} = -\beta_{1}Y_{f}^TY_{f}{\tilde{\theta}}. \label{eq:theta_tilde_dot}
\end{equation}
The filter design is thus motivated by the fact that if the matrix $Y_{f}^TY_{f}$ is positive definite, uniformly in $t$, then the Lyapunov function $V_{1}(\tilde{\theta}) =  \frac{1}{2} \tilde{\theta}^{T}\beta_{1}^{-1}\tilde{\theta}$ can be used to establish convergence of the parameter estimation error to the origin. Initially, $Y_{f}^TY_{f}$ is a matrix of zeros. To ensure that there exists some finite time $T$ such that $Y_{f}^T(t)Y_{f}(t)$ is positive definite, uniformly in $t$ for all $t\geq T$, the following \emph{finite} excitation condition is imposed.
\begin{assumption} 
    There exists a time instance $T>0$ such that $ Y_f(T)$ is full rank.\label{ass:finite_excitation}
\end{assumption}
Note that the minimum eigenvalue of $Y_f$ is trivially non-decreasing for $t\geq t_{3}$ since $Y_{f}(t)$ is constant $\forall t \geq t_{3}$. For $t_{4} \leq t_{5} \leq t_{3}$, $Y_f(t_5) = Y_f(t_4) +  \int_{t_4}^{t_5} Y^{T}(\tau)Y(\tau)d\tau$. Since $Y_f(t_4)$ is positive semidefinite, and so is the integral $\int_{t_4}^{t_5} Y^{T}(\tau)Y(\tau)d\tau$, we conclude that $\lambda_{\min}(Y_f(t_5))\geq \lambda_{\min}(Y_f(t_4)) $, As a result, $t\mapsto\lambda_{\min}(Y_{f}(t))$ is non-decreasing. Therefore, if Assumption \ref{ass:finite_excitation} is satisfied at $t=T$, then $Y_f(t)$ is also full rank for all $t\geq T$. Similar to other MBRL methods that rely on system identification  (see e.g., chapter 4 of \cite{SCC.Kamalapurkar.Walters.ea2018}) the following assumption is needed to ensure boundedness of the state trajectories over the interval $ [0,T] $.
\begin{assumption}
    A fallback controller $ \psi:\mathbb{R}^n \times \mathbb{R}_{\geq 0}\to\mathbb{R}^q $ that keeps the trajectories of \eqref{eq:BTDynamics} inside a known bounded set over the interval $[0,T)$, without requiring the knowledge of $ \theta $, is available. \label{ass:known_controller}
\end{assumption}
If a fallback controller that satisfies Assumption \ref{ass:known_controller} is not available, then, under the additional assumption that the trajectories of \eqref{eq:BTDynamics} are exciting over the interval $ [0,T) $, such a controller can be learned, online while maintaining system stability, using model-free reinforcement learning techniques such as \cite{SCC.Bhasin.Kamalapurkar.ea2013a, SCC.Vrabie.Lewis.ea2010} and \cite{SCC.Modares.Lewis.ea2014}.
 
\begin{remark}
    While the analysis of the developed technique dictates that a different stabilizing controller should be used over the time interval $[0,T)$, typically, similar to the examples in Sections \ref{simsec1} and \ref{simsec2}, the transient response of the developed controller provides sufficient excitation so that $T$ is small (in the examples provided in Sections \ref{simsec1} and \ref{simsec2}, $T$ is the order of $10^{-5}$ and $10^{-6}$, respectively), and a different stabilizing controller is not needed in practice.
\end{remark}

\section{Model-Based Reinforcement Learning}\label{Model-Based Reinforcement Learning}
Lemma \ref{lem:trajectoryRelation} implies that if a feedback controller that practically stabilizes the transformed system in \eqref{eq:BTDynamics} is designed, then the same feedback controller, applied to the original system by inverting the BT, also achieves the control objective stated in Section \ref{control object}. In the following, a controller that practically stabilizes \eqref{eq:BTDynamics} is designed as an estimate of the controller that minimizes the infinite horizon cost. 
\begin{figure}
    \centering
	\includegraphics[width=0.75\columnwidth]{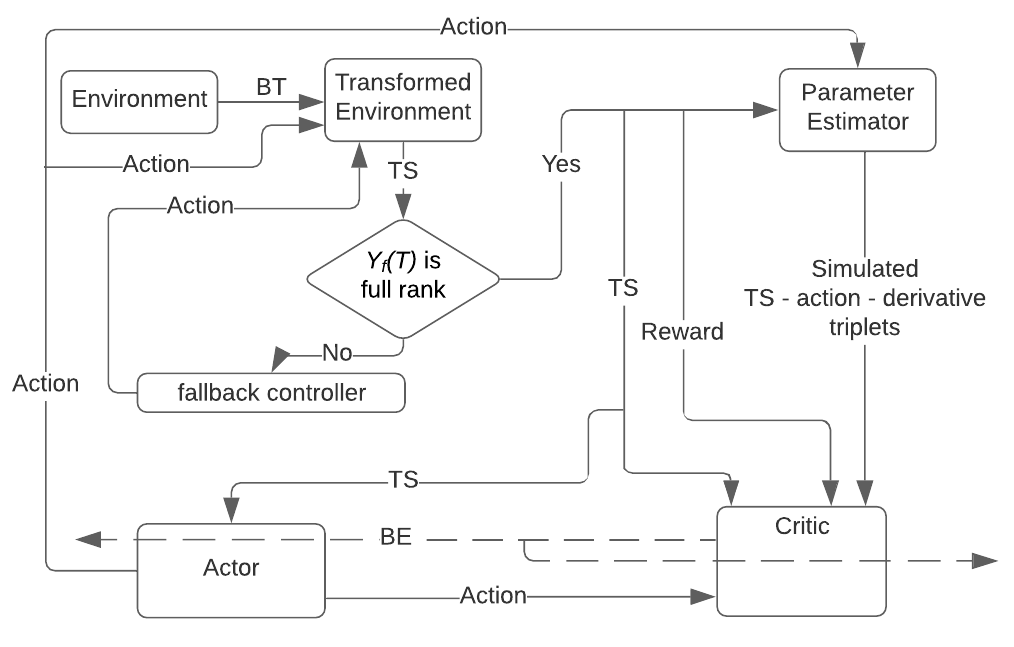}
	\caption{The developed BT MBRL framework. The control system consists of a model-based barrier-actor-critic-estimator architecture. In addition to the transformed state-action measurements, the critic also utilizes states, actions, and the corresponding state derivatives, evaluated at arbitrarily selected points in the state space, to learn the value function. In the figure, BT: Barrier Transformation; TS: Transformed State; BE: Bellman Error.}
	\label{fig:TNNS}
\end{figure}
\begin{equation} \label{cost function}
    J(u(\cdot)) \coloneqq 	\int_{0}^\infty r(\phi(\tau,s^0,u(\cdot)), u(\tau)) d\tau,
\end{equation}
over the set $\mathcal{U}$ of piecewise continuous functions $t\mapsto u(t)$, subject to \eqref{eq:BTDynamics}, where $\phi(\tau, s^0, u (\cdot))$ denotes the trajectory of
\eqref{eq:BTDynamics}, evaluated at time $\tau$, starting from the state $s^0$ and
under the controller $u (\cdot)$, $r(s,u) \coloneqq s^{T}Qs + u^{T}Ru$, and $Q$ $\in$  $\mathbb{R}^{n \times n}$ and $R \in \mathbb{R}^{q \times q}$ are symmetric positive definite (PD) matrices\footnote{  For ease of exposition, a state penalty of the form $s^{T} Q s$ has been considered in this paper. However, the analysis extends in a straightforward manner to general positive definite state penalty functions $s\mapsto Q(s)$. As such, a state penalty function $x\mapsto P(x)$, given in the original coordinates, can easily be transformed into an equivalent state penalty $Q(s) = P(b^{-1}(s))$. Since the barrier function is monotonic and $b(0) = 0$, if $P$ is positive definite, then so is $Q$. Furthermore, for applications with bounded control inputs, a non-quadratic penalty function similar to Eq. 17 of \cite{SCC.Yang.Ding.ea2020} can be incorporated in \eqref{cost function}.}.

Assuming that an optimal controller exists, let the optimal value function, denoted by $V^{*} : \mathbb{R}^{n} \times \mathbb{R}^q  \rightarrow \mathbb{R} $, be defined as
\begin{equation}
    V^{*}(s) := \min_{u(\cdot)\in \mathcal{U}_{[t,\infty})}\int_{t}^\infty r(\phi(\tau,s,u_{[t,\tau)}(\cdot)), u(\cdot)) d\tau,
\end{equation}
where $u_I$ and $\mathcal{U}_I$ are obtained by restricting the domains of $u$ and functions in $\mathcal{U}_I$ to the interval $ I \subseteq \mathbb{R} $, respectively. Assuming that the optimal value function is continuously differentiable, it can be shown to be the unique positive definite solution of the Hamilton-Jacobi-Bellman (HJB) equation 
(see, e.g., \cite{SCC.Greene.Deptula.ea2021})
\begin{equation} \label{HJB} 
    \min_{u\in\mathbb{R}^q} \left(\nabla_{s}V(s)\left(F(s)+G(s)u\right)+ s^{T}Qs + u^{T}Ru\right) = 0,
\end{equation}
where $\nabla_{\left(\cdot \right)} \coloneqq  \frac{\partial}{\partial {\left(\cdot \right)} }$. Furthermore, the optimal controller is given by the feedback policy $u(t) = u^*(\phi(t,s,u_{[0,t)}))$ where $ u^{*}: \mathbb{R}^{n} \rightarrow \mathbb{R}^{q} $ defined as 
\begin{equation}
  u^{*}(s) := -\frac{1}{2}R^{-1}G(s)^{T}(\nabla_sV^{*}(s))^{T}.
\end{equation}
\begin{remark}
    In the developed method, the cost function is selected to be quadratic in the transformed coordinates. However, a physically meaningful cost function is more likely to be available in the original coordinates. If such a cost function is available, it can be transformed from the original coordinates to the barrier coordinates using the inverse barrier function, to yield a cost function that is not quadratic in the state. While the analysis in this paper addresses the quadratic case, it can be extended to address the non-quadratic case with minimal modifications as long as $ s \mapsto r(s,u) $ is positive definite for all $u \in \mathbb{R}^q$.
\end{remark}

\subsection{Value function approximation}
Since computation of analytical solutions of the HJB equation is generally infeasible, especially for systems with uncertainty, parametric approximation methods are used to approximate the value function $V^{*}$ and the optimal policy $u^{*}$. The optimal value function is expressed as
\begin{equation} \label{eq:optimalV}
    V^{*}\left(s\right)=W^{T}\sigma\left(s\right)+\epsilon\left(s\right),
\end{equation}
where $W\in\mathbb{R}^{L}$ is an unknown vector of bounded weights, $\sigma:\mathbb{R}^{n}\rightarrow\mathbb{R}^{L}$ is a vector of continuously differentiable nonlinear activation functions such that $\sigma\left(0\right)=0$ and $\nabla_{s} \sigma \left(0\right)=0$, $L\in\mathbb{N}$ is the number of basis functions, and $\epsilon:\mathbb{R}^{n}\rightarrow\mathbb{R}$ is the reconstruction error.

The basis functions are selected such that the approximation of the functions and their derivatives is uniform over the compact set $\chi \subset\mathbb{R}^{n}$ so that given a positive constant $\overline{\epsilon}\in\mathbb{R}$, there exists $L\in\mathbb{N}$ and known positive constants $\bar{W}$ and $\overline{\sigma}$ such that $\left\Vert W\right\Vert \leq\bar{W}$, $\sup_{s\in\chi}\left \| \epsilon \left(s\right)\right\| \leq\overline{\epsilon}$, $\sup_{s\in\chi}\left\|\nabla_{s}\epsilon\left(s\right)\right\| \leq\overline{\epsilon}$, $\sup_{s\in\chi}\left \| \sigma \left(s\right)\right\| \leq\overline{\sigma}$, and $\sup_{s\in\chi}\left\|\nabla_{s}\sigma\left(s\right)\right\| \leq\overline{\sigma}$ (see, e.g., \cite{SCC.Sauvigny2012}). Using \eqref{HJB}, a representation of the optimal controller using the same basis as the optimal value function is derived as
\begin{equation}\label{eq:optimalu}
    u^{*}\left(s\right)=-\frac{1}{2}R^{-1}G^{T}\left(s\right)\left(\nabla_{s}\sigma^{T}\left(s\right)W+\nabla_{s}\epsilon^{T}\left(s\right)\right).
\end{equation}
Since the ideal weights, $W$, are unknown, an actor-critic approach is used in the following to estimate $W$. To that end, let the NN estimates $\hat{V}:\mathbb{R}^{n}\times\mathbb{R}^{L}\to\mathbb{R}$ and $\hat{u} : \mathbb{R}^{n} \times \mathbb{R}^{L} \to \mathbb{R}^{q}$ be defined as
\begin{gather}
    \hat{V}\left(s,\hat{W}_{c}\right)\coloneqq\hat{W}_{c}^{T}\sigma\left(s\right),\label{V_app}\\
    \hat{u}\left(s,\hat{W}_{a}\right)\coloneqq-\frac{1}{2}R^{-1}G^{T}\left(s\right)\nabla_{s}\sigma^{T}\left(s\right)\hat{W}_{a},\label{u_app}
\end{gather}
where the critic weights, $\hat{W}_{c}\in\mathbb{R}^{L}$ and actor weights, $\hat{W}_{a}\in\mathbb{R}^{L}$ are estimates of the ideal weights, $W$.

\subsection{Bellman Error}
Substituting \eqref{V_app} and \eqref{u_app} into \eqref{HJB} results in a residual term, $\hat{\delta}: \mathbb{R}^{n} \times \mathbb{R}^{L} \times \mathbb{R}^{L} \times \mathbb{R}^{p} \rightarrow \mathbb{R}$, referred to as Bellman Error (BE), defined as
\begin{equation} \label{BE1}                  
    \hat{\delta}(s,\hat{W}_{c},\hat{W}_{a},\hat{\theta}) \coloneqq\nabla_{s}\hat{V}(s,\hat{W}_{c})\left(y(s)\hat{\theta} +G(s)\hat{u}(s,\hat{W}_{a})\right) + \hat{u}(s,\hat{W}_{a})^{T}R\hat{u}(s,\hat{W}_{a})+s^{T}Qs.
\end{equation}
Traditionally, online RL methods require a persistence of excitation (PE) condition to be able learn the approximate control policy (see, e.g., \cite{SCC.Modares.Lewis.ea2013, SCC.Kamalapurkar.Rosenfeld.ea2016, SCC.Kiumarsi.Lewis.ea2014}). Guaranteeing PE a priori and verifying PE online are both typically impossible. However, using virtual excitation facilitated by model-based BE extrapolation, stability and convergence of online RL can established under a PE-like condition that, while impossible to guarantee a priori, can be verified online (by monitoring the minimum eigenvalue of a matrix in the subsequent Assumption \ref{ass:CLBCADPLearnCond} (see, e.g., \cite{SCC.Kamalapurkar.Walters.ea2016}). Using the system model, the BE can be evaluated at any arbitrary point in the state space. Virtual excitation can then be implemented by selecting a set of states $\left\{ s_{k}\mid k=1,\cdots,N\right\} $ and evaluating the BE at this set of states to yield
\begin{equation} \label{BE2} 
    \hat{\delta}_{k}(s_{k},\hat{W}_{c},\hat{W}_{a},\hat{\theta}) \coloneqq \nabla_{s_{k}}\hat{V}(s_{k},\hat{W}_{c})\big(y_{k}\hat{\theta} +G_{k}\hat{u}(s_{k},\hat{W}_{a})\big) + \hat{u}(s_{k},\hat{W}_{a})^{T}R\hat{u}(s_{k},\hat{W}_{a})+s_{k}^{T}Qs_{k},
\end{equation}
where, $\nabla_{s_{k}} \coloneqq  \frac{\partial}{\partial s_{k}}$, $y_{k} \coloneqq y(s_{k})$ and $G_{k} \coloneqq G\left(s_{k}\right)$. Defining the actor and critic weight estimation errors as $\tilde{W}_{c} \coloneqq W -\hat{W}_{c}$ and  $\tilde{W}_{a} \coloneqq W -\hat{W}_{a}$ and substituting the estimates \eqref{eq:optimalV} and \eqref{eq:optimalu} into \eqref{HJB}, and subtracting from \eqref{BE1} yields the analytical BE that can be expressed in terms of the weight estimation errors as
\begin{equation} \label{Analytical BE}
    \hat{\delta}=-\omega^{T}\tilde{W}_{c}+\frac{1}{4}\tilde{W}_{a}^{T}G_{\sigma}\tilde{W}_{a}-W^{T} \nabla_{s} \sigma y\tilde{\theta}+\Delta,
\end{equation}
where $\Delta\coloneqq\frac{1}{2}W^{T}\nabla_{s}\sigma G_{R}\nabla_{s}\epsilon^{T}+\frac{1}{4}G_{\epsilon}- \nabla_{s} \epsilon F$, $G_{R}\coloneqq GR^{-1}G^{T}\in\mathbb{R}^{n\times n}$,  $G_{\epsilon}\coloneqq \nabla_{s} \epsilon  G_{R} \nabla_{s} \epsilon^{T}\in\mathbb{R}$, $G_{\sigma}\coloneqq \nabla_{s} \sigma G R^{-1}G^{T} \nabla_{s} \sigma^{T} \in\mathbb{R}^{L\times L}$, and  $\omega \coloneqq \nabla_{s} \sigma \left(y\hat{\theta} + G\hat{u}(s,\hat{W}_{a})\right)\in\mathbb{R}^{L}$. In \eqref{Analytical BE} and the rest of the manuscript, the dependence of various functions on the state, $s$, is omitted for brevity whenever it is clear from the context. Similarly, \eqref{BE2} implies that
\begin{equation} \label{Approximate BE}
    \hat{\delta}_{k}=-\omega_{k}^{T}\tilde{W}_{c}+\frac{1}{4}\tilde{W}_{a}^{T}G_{\sigma_{k}}\tilde{W}_{a}-W^{T} \nabla_{s_{k}} \sigma_{k} y_{k}\tilde{\theta}+\Delta_{k},
\end{equation}
where, $F_{k} \coloneqq F(s_{k})$, $\epsilon_{k} \coloneqq \epsilon(s_{k})$, $\sigma_{k} \coloneqq \sigma (s_{k})$, $\Delta_{k} \coloneqq \frac{1}{2}W^{T} \nabla_{s_{k}}\sigma_{k} G_{R_{k}} \nabla_{s_{k}} \epsilon_{k}^{T}+\frac{1}{4}G_{\epsilon_{k}} - \nabla_{s_{k}} \epsilon_{k} F_{k}$, $G_{\epsilon_{k}}\coloneqq \nabla_{s_{k}} \epsilon_{k} G_{R_{k}} \nabla_{s_{k}} \epsilon_{k}^{T}$, $\omega_{k} \coloneqq \nabla_{s_{k}} \sigma_{k}\left(y_{k}\hat{\theta}+  G_{k}\hat{u}(s_{k},\hat{W}_{a})\right)\in\mathbb{R}^{L}$, $G_{R_{k}}\coloneqq G_{k}R^{-1}G_{k}^{T}\in\mathbb{R}^{n\times n}$ and $G_{\sigma_{k}} \coloneqq \nabla_{s_{k}} \sigma_{k} G_{k} R^{-1} G_{k}^{T} \nabla_{s_{k}} \sigma_{k}^{T}\in\mathbb{R}^{L\times L}$. Note that $\sup_{s\in\chi}\left\Vert\Delta\right\Vert\leq d \overline{\epsilon}$ and if $s_{k} \in \chi$ then $ \left\Vert\Delta_{k} \right\Vert\leq d \overline{\epsilon}_{k}$, for some constant $d > 0$.
While the extrapolation states $s_k$ are assumed to be constant in this analysis for ease of exposition, the analysis extends in a straightforward manner to time-varying extrapolation states that are confined to a compact neighborhood of the origin.
\medskip
\subsection{Update laws for Actor and Critic weights}
The actor and the critic weights are held at their initial values over the interval $[0,T)$ and starting at $t=T$, using the instantaneous BE $\hat{\delta}$ from \eqref{BE1} and extrapolated BEs $\hat{\delta}_{k}$ from \eqref{BE2}, the weights are updated according to
\begin{align}
    \dot{\hat{W}}_{c} &= -k_{c_{1}} \Gamma \frac{\omega}{\rho} \hat{\delta} -  \frac{k_{c_{2}}}{N} \Gamma \sum_{k=1}^{N} \frac{\omega_{k}}{\rho_{k}} \hat\delta_{k}, \label{W_c}\\
    \dot{\Gamma} &= \beta\Gamma-k_{c_{1}}\Gamma\frac{\omega\omega^{T}}{\rho^{2}}\Gamma - \frac{k_{c_{2}}}{N}\Gamma\sum_{k=1}^{N}\frac{\omega_{k}\omega_{k}^{T}}{\rho_{k}^{2}}\Gamma,\label{gamma}\\
    \dot{\hat{W}}_{a} &= -k_{a_{1}}\left(\hat{W}_{a}-\hat{W}_{c}\right)-k_{a_{2}}\hat{W}_{a} + \frac{k_{c_{1}}G_{\sigma}^{T}\hat{W}_{a}\omega^{T}}{4\rho}\hat{W}_{c} + \sum_{k=1}^{N}\frac{k_{c_{2}}G_{\sigma_{k}}^{T}\hat{W}_{a}\omega_{k}^{T}}{4N\rho_{k}}\hat{W}_{c},\label{W_a}
\end{align}
with $\Gamma\left(t_{0}\right)=\Gamma_{0}$, where $\Gamma:\mathbb{R}_{\geq t_{0}}\to\mathbb{R}^{L\times L}$
is a time-varying least-squares gain matrix, $\rho\left(t\right)\coloneqq1+\gamma_{1}\omega^{T}\left(t\right)\omega\left(t\right)$,
$\rho_{k}\left(t\right)\coloneqq1+\gamma_{1}\omega_{k}^{T}\left(t\right)\omega_{k}\left(t\right)$, $\beta>0\in\mathbb{R}$ is a constant forgetting factor, and $k_{c_{1}},k_{c_{2}},k_{a_{1}},k_{a_{2}}>0\in\mathbb{R}$ are constant adaptation gains. The control commands sent to the system are then computed using the actor weights as 
\begin{equation}\label{u_command}
    u(t)=\begin{cases} \psi(s(t),t), & 0<t<T,\\
    \hat{u}\left(s(t),\hat{W}_{a}(t)\right), & t\geq T,\end{cases}
\end{equation}
where the controller $\psi$ was introduced in Assumption \ref{ass:finite_excitation}. The following verifiable PE-like rank condition is then utilized in the stability analysis. 
\begin{assumption}
    \label{ass:CLBCADPLearnCond}There exists a constant $\underline{c}_{3} > 0$ such that the set of points $\left\{ s_{k}\in\mathbb{R}^{n}\mid k=1,\hdots,N\right\} $ satisfies
    \begin{equation}
    \underline{c}_{3}I_{L} \leq\inf_{t\in\mathbb{R}_{\geq T}}\left(\frac{1}{N}\sum_{k=1}^{N}\frac{\omega_{k}\left(t\right)\omega_{k}^{T}\left(t\right)}{\rho_{k}^{2}\left(t\right)}\right).\label{eq:CLBCPE2}
    \end{equation}
\end{assumption}
Since $\omega_{k}$ is a function of the weight estimates $\hat{\theta}$ and $\hat{W}_{a}$,  Assumption \ref{ass:CLBCADPLearnCond} cannot be guaranteed a priori. However, unlike the PE condition, Assumption \ref{ass:CLBCADPLearnCond} does not impose excitation requirements on the system trajectory, the excitation requirements are imposed on a user-selected set of points in the state space. Furthermore, Assumption \ref{ass:CLBCADPLearnCond} can be verified online. Since $\lambda_{\min}\left(\sum_{k=1}^{N}\frac{\omega_{k}\left(t\right)\omega_{k}^{T}\left(t\right)}{\rho_{k}^{2}\left(t\right)}\right)$ is non-decreasing in the number of samples, $N$, Assumption \ref{ass:CLBCADPLearnCond} can be met, heuristically, by increasing the number of samples.

\section{Stability Analysis}\label{Stability Analysis}
In the following theorem, the stability of the trajectories of the transformed system, and the estimation errors $\tilde{W}_{c}$, $\tilde{W}_{a}$, and $\tilde{\theta}$ are shown.  
\begin{theorem}\label{thm1}
    Provided Assumptions \ref{ass:finite_excitation}, \ref{ass:known_controller}, and \ref{ass:CLBCADPLearnCond} hold, the gains are selected large enough based on \eqref{eq:gain1} - \eqref{eq:gain4}, and the weights $\hat{\theta}$, $\hat{W}_{c}$, $\Gamma$, and $\hat{W}_{a}$ are updated according to \eqref{theta_update}, \eqref{W_c}, \eqref{gamma}, and \eqref{W_a}, respectively, then the estimation errors $\tilde{W}_{c}$, $\tilde{W}_{a}$, and $\tilde{\theta}$ and the trajectories of the transformed system in \eqref{eq:BTDynamics} under the controller in \eqref{u_command} are locally uniformly ultimately bounded. 
\end{theorem}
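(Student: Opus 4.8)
The plan is to carry out a composite Lyapunov analysis for the concatenated error state $Z \coloneqq [s^{T},\,\tilde{W}_{c}^{T},\,\tilde{W}_{a}^{T},\,\tilde{\theta}^{T}]^{T}$, splitting the time axis at $t=T$. On $[0,T)$ the actor and critic weights are frozen at their initial values and the control is the fallback controller $\psi$ of Assumption \ref{ass:known_controller}, so by construction the transformed trajectories remain in a known bounded set; consequently the filter states $Y$, $Y_{f}$, $G_{f}$, $X_{f}$ are bounded, and since $\dot{\tilde{\theta}} = -\beta_{1}Y_{f}^{T}Y_{f}\tilde{\theta}$ with $Y_{f}^{T}Y_{f}\succeq 0$ on $[0,T)$, the function $V_{1}(\tilde{\theta}) = \tfrac{1}{2}\tilde{\theta}^{T}\beta_{1}^{-1}\tilde{\theta}$ is non-increasing and all signals stay bounded over $[0,T)$. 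This reduces the problem to establishing uniform ultimate boundedness on $[T,\infty)$, with $s(T)$ lying in a suitable compact neighborhood of the origin.

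For $t \geq T$ I would use the candidate $V_{L}(Z,t) \coloneqq V^{*}(s) + \tfrac{1}{2}\tilde{W}_{c}^{T}\Gamma^{-1}(t)\tilde{W}_{c} + \tfrac{1}{2}\tilde{W}_{a}^{T}\tilde{W}_{a} + \tfrac{1}{2}\tilde{\theta}^{T}\beta_{1}^{-1}\tilde{\theta}$. First I would show $V_{L}$ is positive definite and decrescent: positive definiteness of $V^{*}$ follows from it being the positive definite solution of the HJB equation \eqref{HJB}, and one needs the sandwich $\underline{\Gamma}\,I_{L} \preceq \Gamma(t) \preceq \overline{\Gamma}\,I_{L}$ for $t\geq T$, where the upper bound comes from $\beta>0$ together with positive semidefiniteness of the regressor outer products in \eqref{gamma}, and the lower bound from the PE-like rank condition of Assumption \ref{ass:CLBCADPLearnCond} — this is the standard bounded-gain argument for forgetting-factor least squares. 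Differentiating along the closed-loop flow, I would substitute the HJB equation to rewrite $\nabla_{s}V^{*}(F+G\hat{u})$ as $-s^{T}Qs-(u^{*})^{T}Ru^{*}$ plus a term proportional to $\hat{u}-u^{*}$; substitute the analytical Bellman error expressions \eqref{Analytical BE}–\eqref{Approximate BE} and the update laws \eqref{W_c}–\eqref{W_a} into the critic and actor terms; and use $\dot{V}_{1} = -\tilde{\theta}^{T}Y_{f}^{T}Y_{f}\tilde{\theta} \leq -\underline{y}\,\|\tilde{\theta}\|^{2}$ with $\underline{y} \coloneqq \lambda_{\min}(Y_{f}^{T}(T)Y_{f}(T)) > 0$ guaranteed by Assumption \ref{ass:finite_excitation} and the monotonicity of $t\mapsto\lambda_{\min}(Y_{f}(t))$ shown in the text.

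After these substitutions the dominant negative terms are $-\underline{q}\|s\|^{2}$, $-k_{c_{2}}\underline{c}_{3}\|\tilde{W}_{c}\|^{2}$ (using Assumption \ref{ass:CLBCADPLearnCond} to lower-bound the extrapolated regressor sum), terms of order $-(k_{a_{1}}+k_{a_{2}})\|\tilde{W}_{a}\|^{2}$, and $-\underline{y}\|\tilde{\theta}\|^{2}$, while the remaining contributions — the reconstruction residuals $\Delta,\Delta_{k}$ bounded by $d\overline{\epsilon}$, the quadratic-in-$\tilde{W}_{a}$ term $\tfrac{1}{4}\tilde{W}_{a}^{T}G_{\sigma}\tilde{W}_{a}$, the parameter-coupling term $W^{T}\nabla_{s}\sigma\,y\,\tilde{\theta}$, and the various actor–critic cross terms — are bounded using the uniform NN bounds on $\chi$, local Lipschitz bounds on $F$, $G$, $y$, and Young's inequality. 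Choosing the gains large enough per \eqref{eq:gain1}–\eqref{eq:gain4} renders the resulting quadratic form in $\|Z\|$ negative definite outside a ball, giving $\dot{V}_{L} \leq -c\|Z\|^{2}$ for $\|Z\|\geq\nu$ and hence UUB. I expect the most delicate step to be the \emph{local} bookkeeping: one must choose the admissible set of initial conditions and the design constants so that the invariant sublevel set of $V_{L}$ entered by the trajectory never leaves $\chi$ — where the NN bounds and Assumption \ref{ass:CLBCADPLearnCond} are valid — and so that $s(T)$ produced by the fallback controller on $[0,T)$ lies inside this sublevel set; the apparent circularity from $\omega_{k}$ depending on $\hat{\theta}$ and $\hat{W}_{a}$ is resolved precisely because the analysis confines $(\hat{\theta},\hat{W}_{a})$ to a compact set on which Assumption \ref{ass:CLBCADPLearnCond} is posited. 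Finally, Lemma \ref{lem:trajectoryRelation} transfers the boundedness and convergence conclusion to the original coordinates, ensuring the state constraints are met.
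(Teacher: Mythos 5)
Your proposal follows essentially the same route as the paper's proof: boundedness on $[0,T)$ via the fallback controller of Assumption \ref{ass:known_controller}, the composite Lyapunov function $V_{L}=V^{*}+\tfrac{1}{2}\tilde{W}_{c}^{T}\Gamma^{-1}\tilde{W}_{c}+\tfrac{1}{2}\tilde{W}_{a}^{T}\tilde{W}_{a}+V_{1}$ on $[T,\infty)$, substitution of the HJB equation, the analytical Bellman errors \eqref{Analytical BE}--\eqref{Approximate BE} and the update laws, Young's inequality and completion of squares under the gain conditions \eqref{eq:gain1}--\eqref{eq:gain4}, and a Khalil-style UUB theorem together with the sublevel-set bookkeeping that confines $Z$ to $B_{r}$. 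One minor correction to a step the paper itself only asserts: in the standard bounded-gain argument for \eqref{gamma} the roles are reversed from what you state --- the \emph{upper} bound $\overline{\Gamma}$ is the one that requires the excitation condition of Assumption \ref{ass:CLBCADPLearnCond} (otherwise the forgetting factor $\beta>0$ lets $\Gamma$ grow without bound along unexcited directions), while the \emph{lower} bound $\underline{\Gamma}$ follows from boundedness of the normalized regressors $\omega/\rho$ and $\omega_{k}/\rho_{k}$.
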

\begin{proof}
    See Theorem \ref{thm1} in Appendix.
\end{proof}
Using Lemma \ref{lem:trajectoryRelation}, it can then be concluded that the feedback control law
\begin{equation}\label{u_command_original}
    u(t)=\begin{cases} \psi\left(b_{(a,A)}(x(t)),t\right), & 0<t<T,\\
    \hat{u}\left(b_{(a,A)}(x(t)),\hat{W}_{a}(t)\right), & t\geq T,\end{cases}
\end{equation}
applied to the original system in \eqref{eq:Dynamics}, achieves the control objective stated in section \eqref{control object}.

\section{Simulation}\label{Simulation}
To demonstrate the performance of the developed method for a nonlinear system with an unknown value function, two simulation results, one for a two-state dynamical system \eqref{eq:sim1}, and one for a four-state  dynamical  system \eqref{eq:simrobot} corresponding to a two-link planar robot manipulator, are provided. 
\medskip
\subsection{Two state dynamical system}\label{simsec1}
The dynamical system is given by
\begin{equation}
    \dot {x} = f(x)\theta+g(x)u \label{eq:sim1}
\end{equation}
where \begin{equation} \label{sim_dyn}
f(x) = \begin{bmatrix}
     x_{2} & 0  & 0 & 0\\
     0 & x_{1} & x_{2} & x_{2}(\cos(2x_{1})+2)^2
\end{bmatrix},
\end{equation}
$\theta = \left[\theta_{1};\theta_{2}; \theta_{3}; \theta_{4}\right]$ and $g(x)= [0; \cos(2x_{1})+2]$.
The BT version of the system can be expressed in the form \eqref{eq:BTDynamics} with $G(s) =[0 ; G_{2_1}]$ and $
     y(s) = \begin{bmatrix}
     F_{1_1} & 0  & 0 & 0\\
     0 & F_{2_2} & F_{2_3} & F_{2_4}
\end{bmatrix}$, where 
\begin{gather*}
    F_{1_1} = B_{1}(s_{1})x_{2}, \quad F_{2_2} = B_{2}(s_{2})x_{1},
    F_{2_3} = B_{2}(s_{2})x_{2},\quad F_{2_4} = B_{2}(s_{2})x_{2}(\cos(2x_{1})+2)^2,\\
    G_{2_1} = B_{2}(s_{2})\cos(2x_{1})+2.
\end{gather*}
The state $x$ = $[x_{1} \  \ x_{2}]^{T}$ needs to satisfy the constraints $x_{1} \in (-7,5)$ and $x_{2} \in (-5,7)$. The objective for the controller is to minimize the infinite horizon cost function in \eqref{cost function}, with $Q = \text{diag}(10,10)$ and $R = 0.1$. The basis functions for value function approximation are selected as $\sigma(s) = [s_{1}^{2} ; s_{1}s_{2} ; s_{2}^{2}]$. The initial conditions for the system and the initial guesses for the weights and parameters are selected as $ x(0) =  [-6.5;6.5]$, $\hat{\theta}(0) = \left[0;0;0;0\right]$, $\Gamma(0)= \text{diag}(1,1,1)$, and $\hat{W}_{a}(0) = \hat{W}_{c}(0) = \left[\nicefrac{1}{2};\nicefrac{1}{2};\nicefrac{1}{2}\right] $. The ideal values of the unknown parameters in the system model are $\theta_{1} = 1$, $\theta_{2} = -1$, $\theta_{3} = -0.5$, $\theta_{4} = 0.5$, and the ideal values of the actor and the critic weights are unknown. The simulation uses 100 fixed Bellman error extrapolation points in a 4x4 square around the origin of the $s-$coordinate system.
\begin{table}[ht]
    \centering
    \caption{Comparison of costs for a single barrier transformed trajectory of \eqref{eq:sim1}, obtained using the optimal feedback controller generated via the developed method, and obtained using pseudospectral numerical optimal control software}
    \label{table_two_state_cost}
    \begin{tabular}{p{10cm}p{2cm}} 
        \hline 
        Method & Cost \\
        \hline 
        BT MBRL with FCL  & 71.8422\\ 
        GPOPS II (\cite{SCC.GPOPS})& 72.9005  \\
        \hline
    \end{tabular}
\end{table}

\subsubsection{Results for the two state system} \label{ Result_sim1}

As seen from Fig. \ref{fig:original_state}, the system state $x$ stays within the user-specified safe set while converging to the origin. The results in Fig. \ref{fig: Weight_EE} indicate that the unknown weights for both the actor and critic NNs converge to similar values. As demonstrated in Fig. \ref{fig: ParEE} the parameter estimation errors also converge to the zero.

Since the ideal actor and critic weights are unknown, the estimates cannot be directly compared against the ideal weights. To gauge the quality of the estimates, the trajectory generated by the controller $ u(t) = \hat{u}\left(s(t),\hat{W}_{c}^{*}\right), $ where $\hat{W}_{c}^{*}$ is the final value of the critic weights obtained in Fig. \ref{fig: Weight_EE}, starting from a specific initial condition, is compared against the trajectory obtained using an \emph{offline} numerical solution computed using the GPOPS II optimization software (see, e.g., \cite{SCC.GPOPS}). The total cost, generated by numerically integrating \eqref{cost function}, is used as the metric for comparison. The costs are computed over a finite horizon, selected to be roughly 5 times the time constant of the optimal trajectories. The results in Table \ref{table_two_state_cost} indicate that while the two solution techniques generate slightly different trajectories in the phase space (see Fig. \ref{fig: Comparison_Optimal}) the total cost of the trajectories is similar.
\begin{figure}
\centering
		\includegraphics[width=0.65\columnwidth]{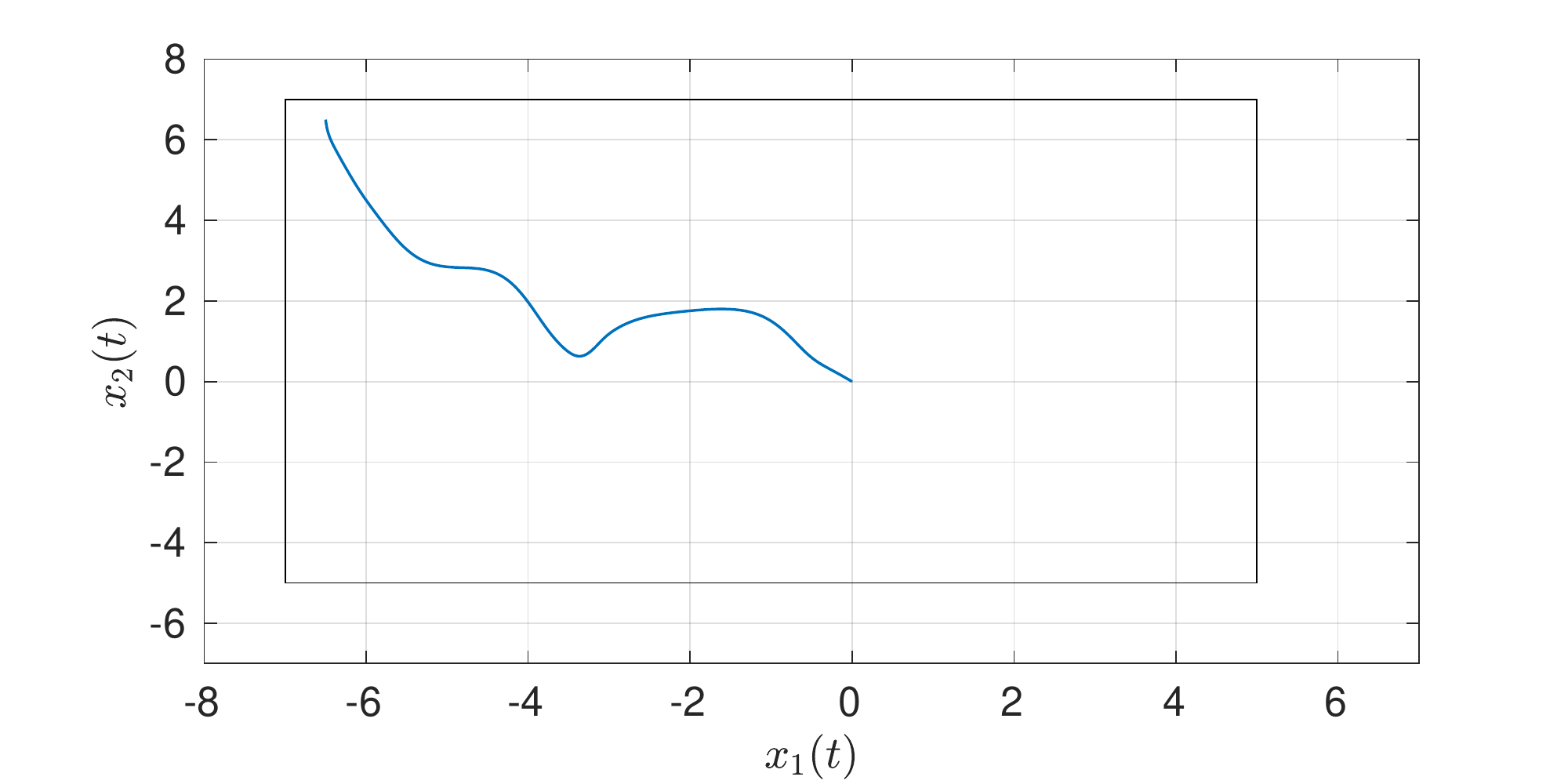}
		\caption{Phase portrait for the two-state dynamical system using MBRL with FCL in the original coordinates. The boxed area represents the user-selected safe set.}
		\label{fig:original_state}
\end{figure}
\begin{figure}
\centering
		\includegraphics[width=0.65\columnwidth]{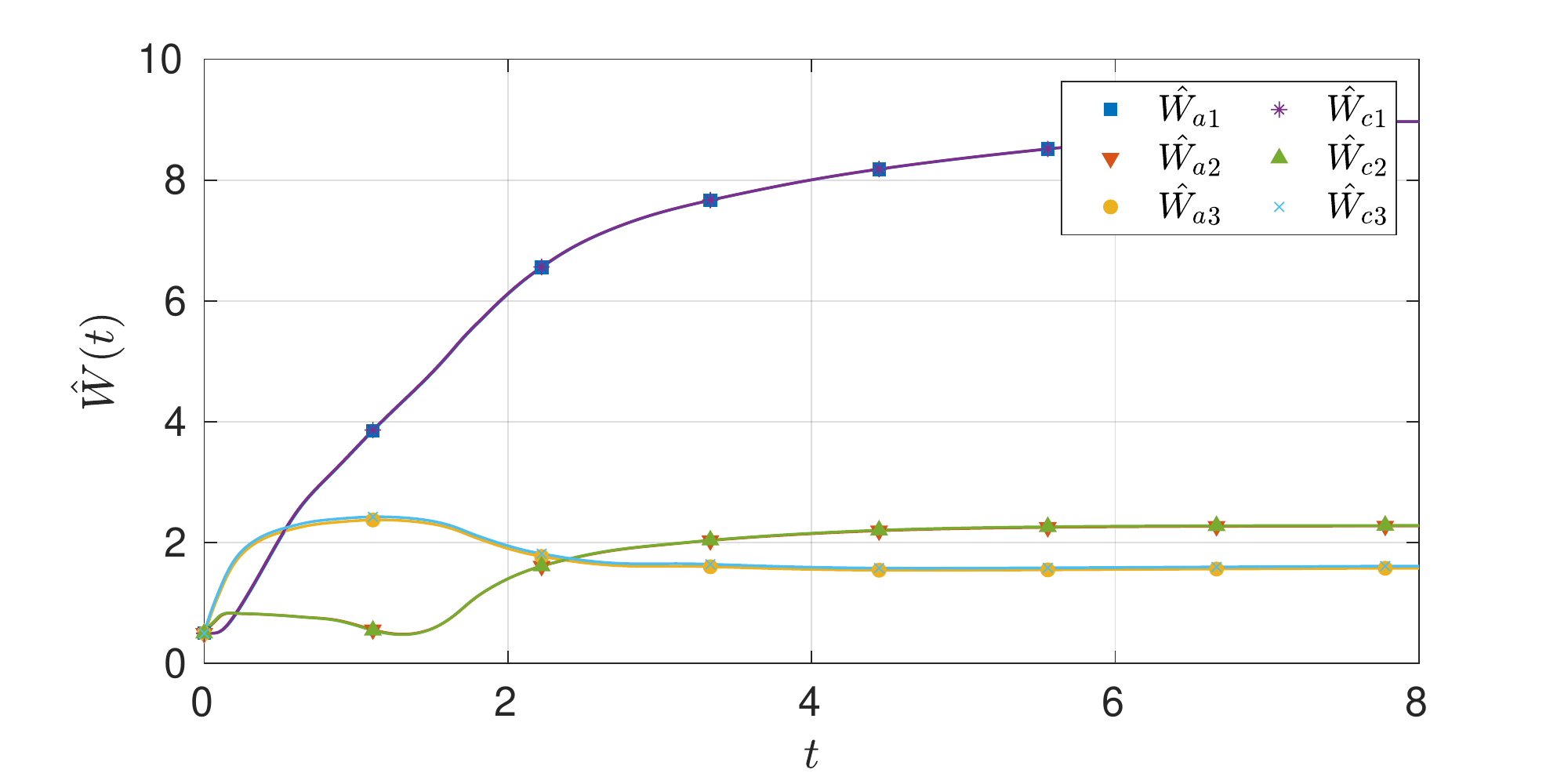}
		\caption{Estimates of the actor and the critic weights under nominal gains for the two-state dynamical system.}
		\label{fig: Weight_EE}
\end{figure}
\begin{figure}
\centering
		\includegraphics[width=0.65\columnwidth]{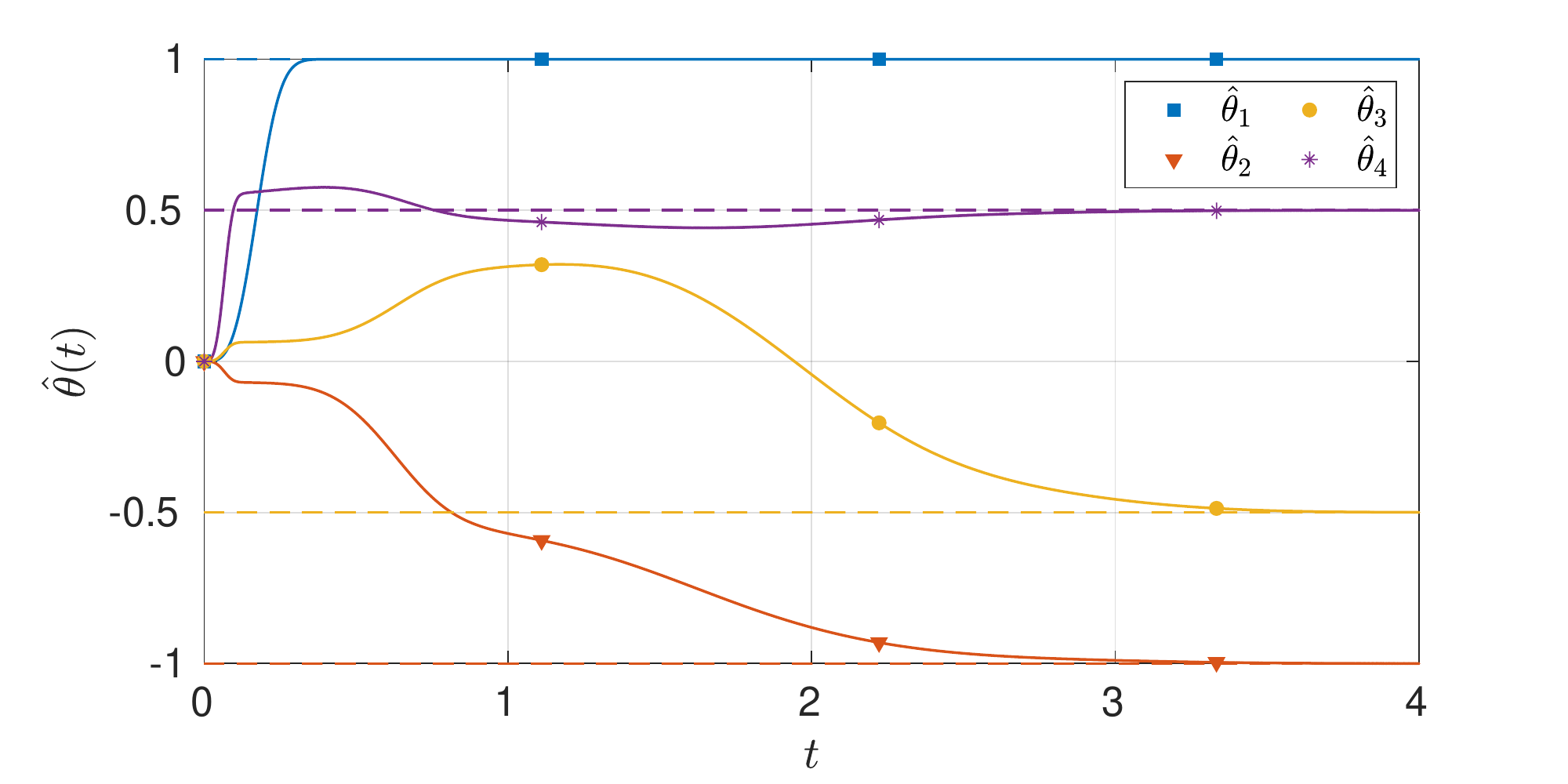}
		\caption{Estimates of the unknown parameters in the system under the nominal gains for the two-state dynamical system. The dash lines in the figure indicates the ideal values of the parameters.}
		\label{fig: ParEE}
\end{figure}
\begin{figure}
\centering
		\includegraphics[width=0.65\columnwidth]{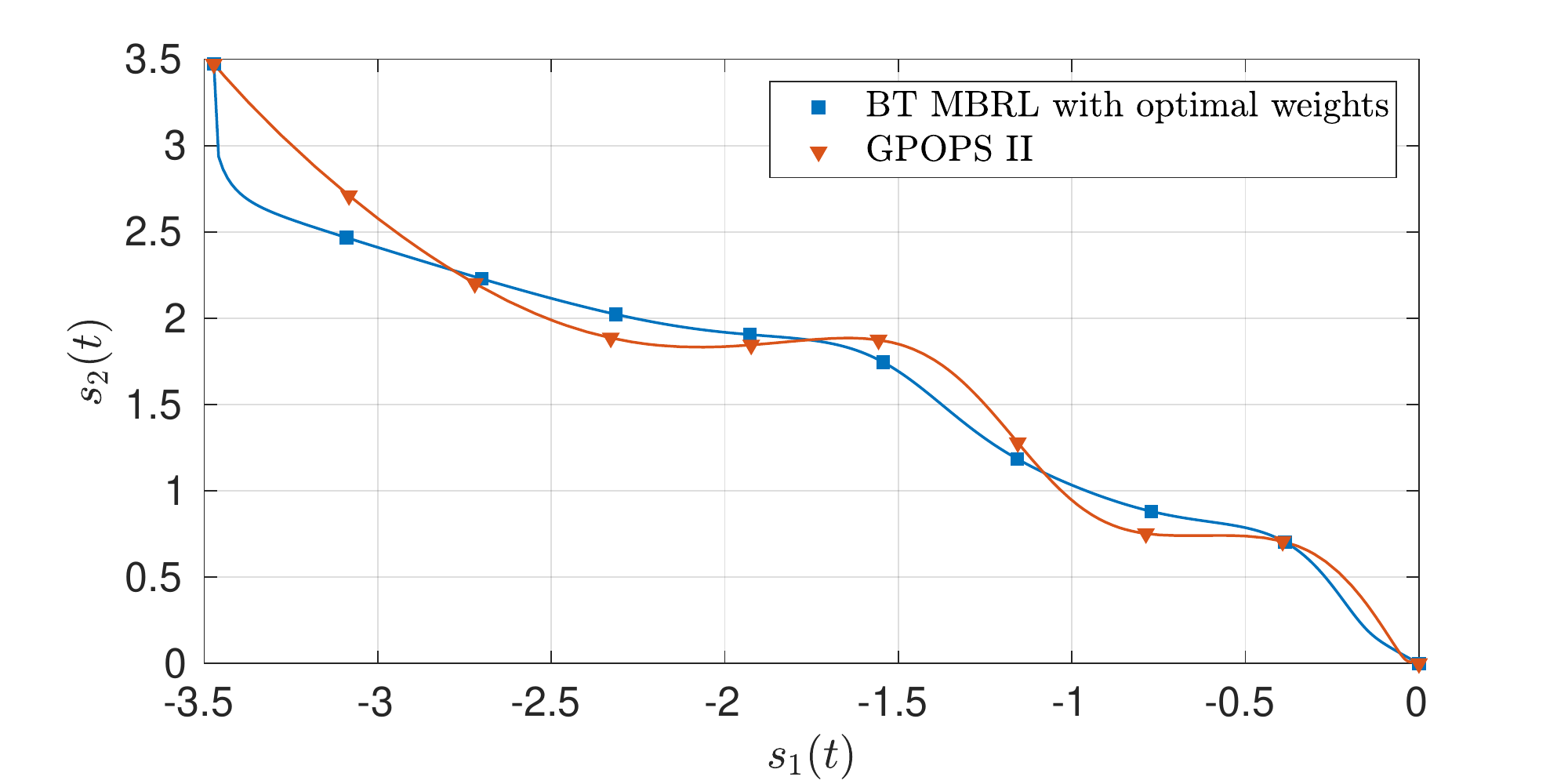}
		\caption{Comparison of the optimal trajectories obtained using GPOPS II and using BT MBRL with FCL and fixed optimal weights for the two-state dynamical system.}
		\label{fig: Comparison_Optimal}
\end{figure}

\subsubsection{Sensitivity Analysis for the two state system} \label{sen_sim1}
To study the sensitivity of the developed technique to changes in various tuning parameters, a one-at-a-time sensitivity analysis is performed. The parameters $k_{c1}$, $k_{c2}$, $k_{a1}$, $k_{a2}$, $\beta$, and $v$ are selected for the sensitivity analysis. The costs of the trajectories, under the optimal feedback controller obtained using the developed method, are presented in Table \ref{table_two_state_sensitivity} for 5 different values of each parameter. The parameters are varied in a neighborhood of the nominal values (selected through trial and error) $k_{c1} = 0.3$, $k_{c2} = 5$, $k_{a1} = 180$, $k_{a2} = 0.0001$, $\beta = .03$, and $v = 0.5$. The value of $\beta_{1}$ is set to be $\text{diag}(50,50,50,50)$. The results in Table \ref{table_two_state_sensitivity} indicate that the developed method is robust to small changes in the learning gains.
\begin{table}[ht]
    \centering
    \caption{Sensitivity Analysis for the two state system}
    \label{table_two_state_sensitivity}
    \begin{tabular}{p{2cm}p{2cm}p{2cm}p{2cm}p{2cm}p{2cm}p{2cm}} 
      \hline 
     $k_{c_1}$=  & 0.01 & 0.05 &  0.1 & 0.2  & 0.3    \\ 
       \hline 
     Cost   & 72.7174 & 72.6919 &  72.5378 & 72.3019  & 72.1559  \\ 
      \hline
         \hline 
       $k_{c_2}$=  & 2 &  3 & 5  & 10 & 15  \\ 
       \hline  
     Cost   &71.7476&72.3198&72.1559&71.8344&71.7293\\ 
      \hline
         \hline  
       $k_{a_1}$=  & 175 & 180 & 250  & 500 & 1000  \\ 
    \hline  
     Cost &72.1568 &  72.1559 &72.1384 & 72.1085 &72.0901  \\ 
      \hline
         \hline 
       $k_{a_2}$=  & 0.0001 & 0.0009  & 0.001  & 0.005 & 0.01 \\ 
       \hline  
     Cost   &72.1559&  72.1559&  72.1559&  72.1559&  72.1559\\ 
      \hline 
        \hline 
       $\beta$=  & 0.001 & 0.005 &  0.01 & 0.03   & 0.04   \\ 
      \hline 
     Cost   &72.2141&72.1559&72.1958&72.1559&72.1352 \\ 
      \hline
        \hline  
       $v$=  & 0.5 &  1 & 10  & 50 & 100 \\ 
       \hline 
     Cost &72.1559&72.4054&72.6582&79.1540&81.32  \\ 
      \hline
         \hline 
    \end{tabular}
\end{table}
\medskip
\subsection{Four state dynamical system}\label{simsec2} 
The four-state dynamical system corresponding to a two-link planar robot manipulator is given by
\begin{equation}
    \dot {x} = f_{1}(x)+ f_{2}(x)\theta+g(x)u \label{eq:simrobot}
\end{equation}
where \begin{equation} \label{sim_dyn1robot}
f_{1}(x) = 
\begin{bmatrix}
     x_{3}\\
     x_{4}\\
     -M^{-1}V_{m} \begin{bmatrix} x_{3}\\x_{4} \end{bmatrix}
\end{bmatrix}, \quad
f_{2}(x) = 
\begin{bmatrix}
     0, \  0, \  0, \  0\\
     0, \  0, \  0, \  0\\
     -[M^{-1}, M^{-1}]D 
\end{bmatrix}, \quad \theta = \begin{bmatrix} f_{d_{1}} \\ f_{d_{2}} \\ f_{s_{1}} \\ f_{s_{1}} \end{bmatrix},
\end{equation}
\begin{equation} \label{sim_dyn3robot}
    g(x) = 
    \begin{bmatrix}
     0,0 \\ 0,0 \\ (M^{-1})^{T}
    \end{bmatrix}, \quad D \coloneqq \mathrm{diag}
    \begin{bmatrix}
     x_{3}, x_{4}, \tanh(x_{3}), \tanh(x_{4})
    \end{bmatrix}, 
\end{equation}
\begin{equation} \label{M}
    M \coloneqq 
    \begin{bmatrix}
     p_{1}+2p_{3}c_{2} & p_{2}+p_{3}c_{2} \\ p_{2}+p_{3}c_{2} & p_{2}
    \end{bmatrix} \in \mathbb{R}^{2 \times 2}, \quad
    V_{M} \coloneqq 
    \begin{bmatrix}
     -p_{3}s_{2}x_{4} &  -p_{3}s_{2}(x_{3}+x_{4}) \\  p_{3}s_{2}x_{3} & 0
    \end{bmatrix} \in \mathbb{R}^{2 \times 2},  
\end{equation}
with $s_{2} = \sin(x_{2})$, $c_{2} = \cos(x_{2})$, $p_{1} = 3.473$, $p_{2} = 0.196$, $p_{3} = 0.242$. The positive constants $f_{d_{1}},f_{d_{2}},f_{s_{1}},f_{s_{1}} \in \mathbb{R}$ are the unknown parameters. The parameters are selected as $f_{d_{1}} = 5.3 , f_{d_{2}} = 1.1, f_{s_{1}} = 8.45, f_{s_{1}}= 2.35$. The state $x$ = $[x_{1} \  \ x_{2} \  \ x_{3} \  \ x_{4}]^{T}$ that corresponds to angular positions and the angular velocities of the two links needs to satisfy the constraints, $x_{1} \in (-7,5)$, $x_{2} \in (-7,5)$, $x_{3} \in (-5,7)$ and $x_{4} \in (-5,7)$. The objective for the controller is to minimize the infinite horizon cost function in \eqref{cost function}, with $Q = \text{diag}(1,1,1,1)$ and $R = \text{diag}(1,1)$ while identifying the unknown parameters $ \theta \in \mathbb{R}^4 $ that correspond to static and dynamic friction coefficients in the two links. The ideal values of the the unknown parameters are $\theta_{1} = 5.3$, $\theta_{2} = 1.1$, $\theta_{3} = 8.45$, and $\theta_{4} = 2.35$.
The basis functions for value function approximation are selected as $\sigma(s) = 
[s_{1}s_{3} ;s_{2}s_{4} ;s_{3}s_{2} ;s_{4}s_{1} ;s_{1}s_{2} ;s_{4}s_{3} ;s_{1}^{2} ; s_{2}^{2}; s_{3}^{2} ;s_{4}^{2}]$. The initial conditions for the system and the initial guesses for the weights and parameters are selected as $ x(0) =  [-5;-5;5;5]$, $\hat{\theta}(0) = \left[5;5;5;5\right]$, $\Gamma(0)= \text{diag}(10,10,10,10,10,10,10,10,10,10)$, and $\hat{W}_{a}(0) = \hat{W}_{c}(0) = \left[60;2;2;2;2;2;40;2;2;2\right]$. The ideal values of the actor and the critic weights are unknown. The simulation uses 100 fixed Bellman error extrapolation points in a 4x4 square around the origin of the $s-$coordinate system.

\subsubsection{Results for the four state system} \label{Result_sim2}
As seen from Fig. \ref{fig:original_state_robot}, the system state $x$ stays within the user-specified safe set while converging to the origin. As demonstrated in Fig. \ref{fig: ParEE_robot}, the parameter estimations converge to the true values. A comparison with offline numerical optimal control, similar to the procedure used for the two-state, yields the results in Table \ref{table_four_state_cost} indicate that the two solution techniques generate slightly different trajectories in the state space (see Fig. \ref{fig: Comparison_Optimal_robot}) and the total cost of the trajectories is different. We hypothesize that the difference in costs is due to the basis for value function approximation being unknown.
\begin{figure}
\centering
		\includegraphics[width=0.65\columnwidth]{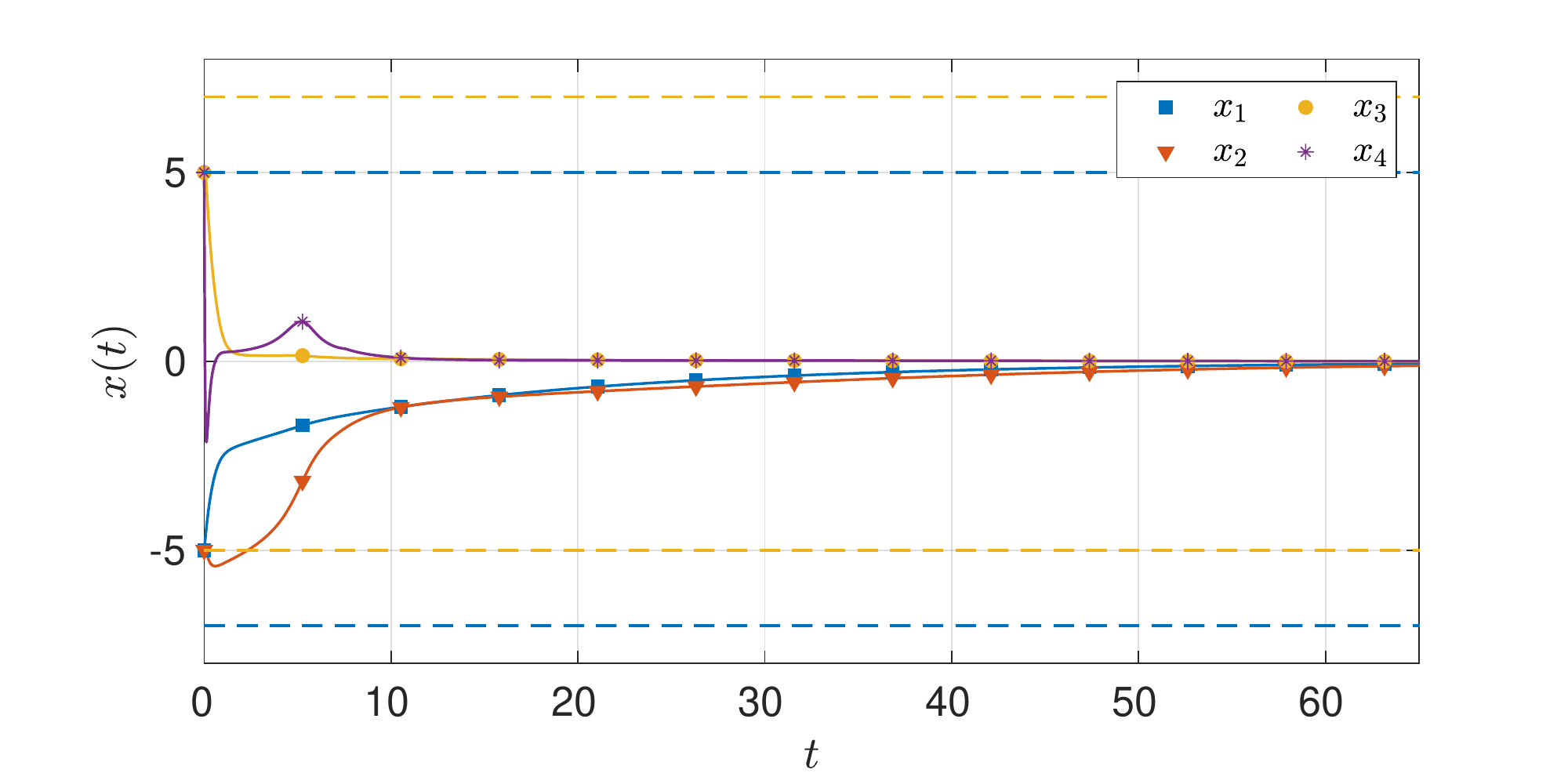}
		\caption{ State trajectories for the four-state dynamical system using MBRL with FCL in the original coordinates. The dash lines represent the user-selected safe set.}
		\label{fig:original_state_robot}
\end{figure}
In summary, the newly developed method can achieve online optimal control thorough a BT MBRL approach while estimating the value of the unknown parameters in the system dynamics and ensuring safety guarantees in the original coordinates during the learning phase.
\begin{figure}
\centering
		\includegraphics[width=0.65\columnwidth]{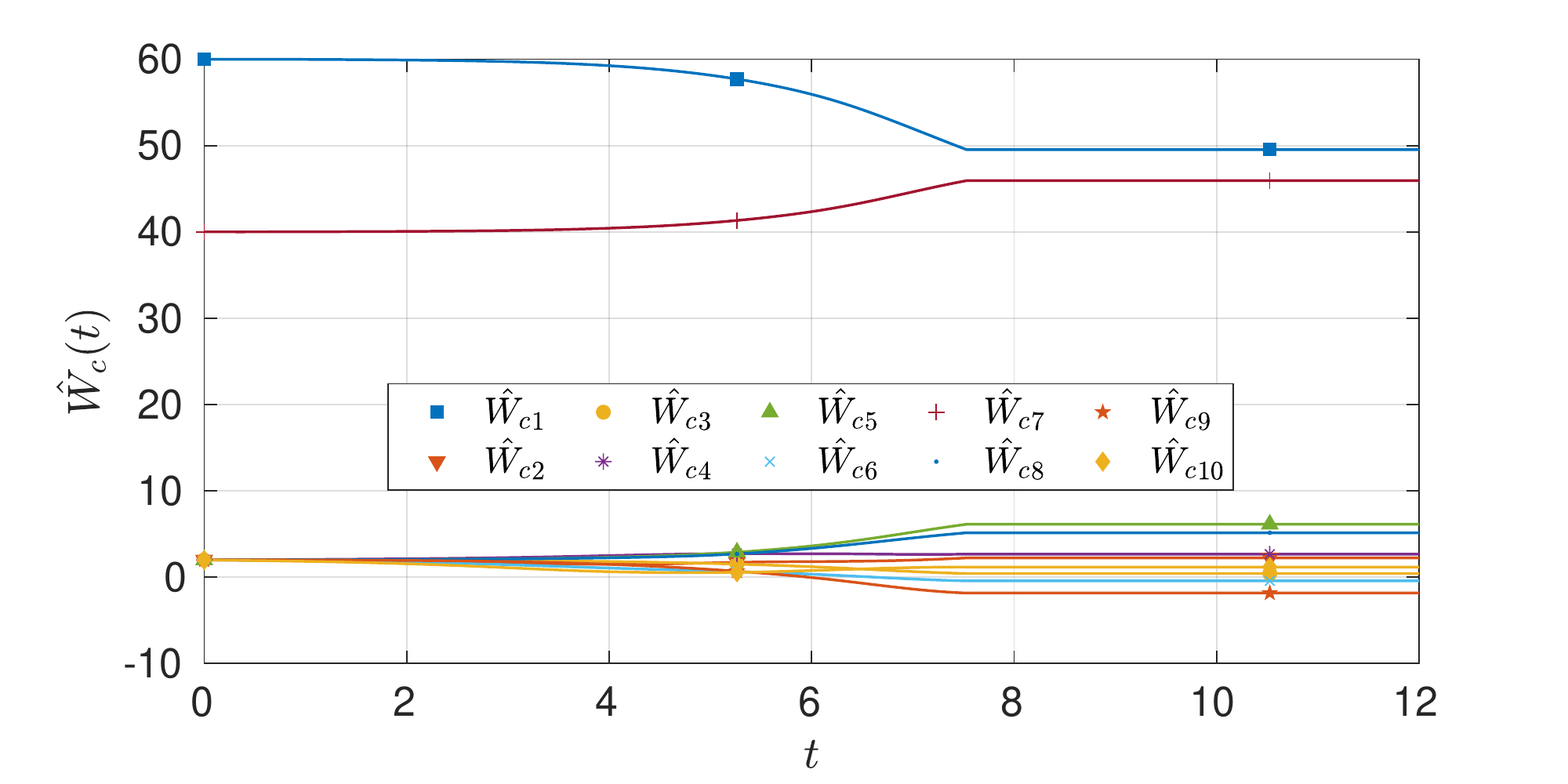}
		\caption{Estimates of the critic weights under nominal gains for the four-state dynamical system.}
		\label{fig: Weight_EE_robot}
\end{figure}
\begin{figure}
\centering
		\includegraphics[width=0.65\columnwidth]{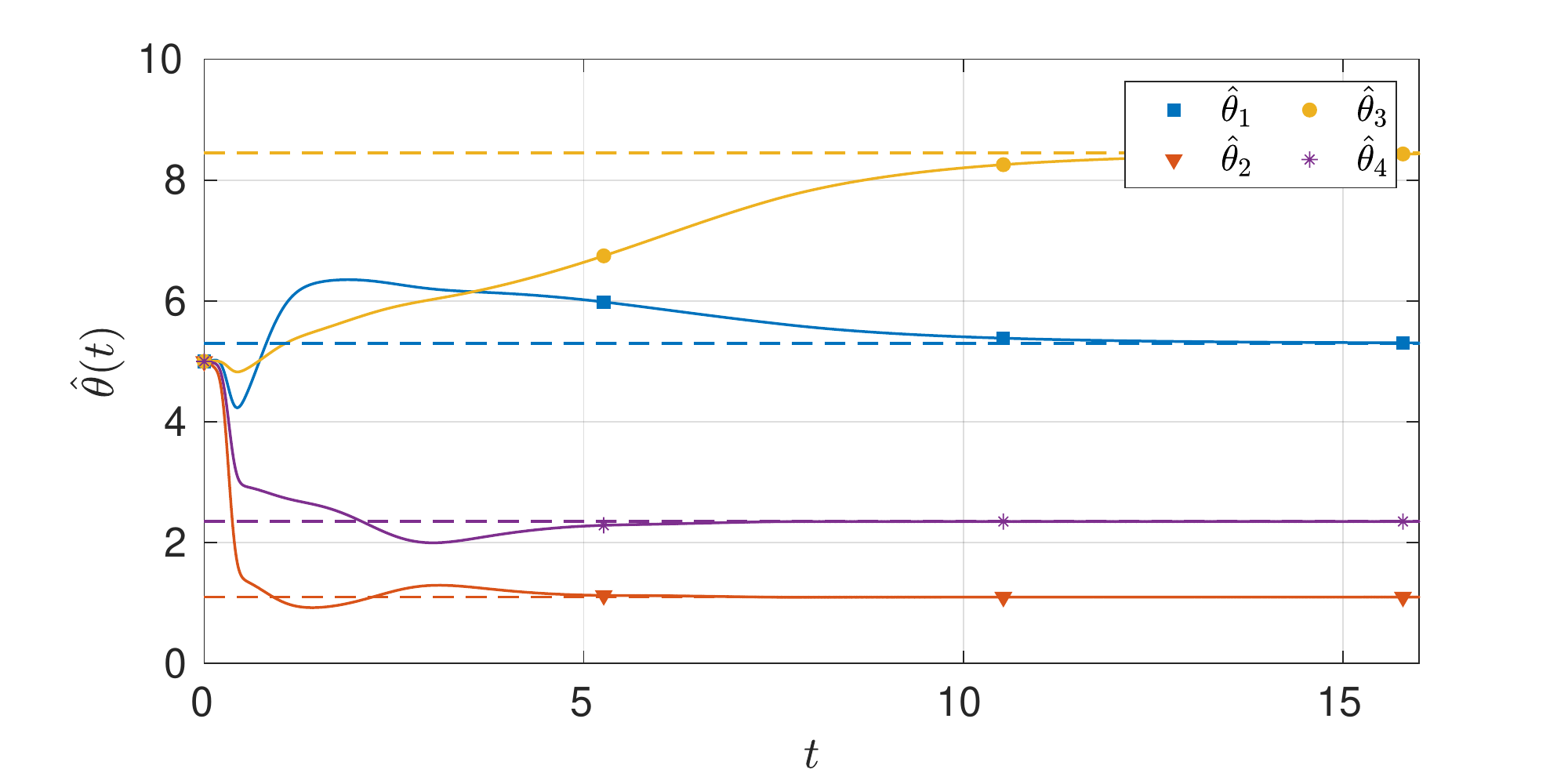}
		\caption{Estimates of the unknown parameters in the system under the nominal gains for the four-state dynamical system. The dash lines in the figure indicates the ideal values of the parameters.}
		\label{fig: ParEE_robot}
\end{figure}
\begin{table}
    \centering
    \caption{Costs for a single barrier transformed trajectory of \eqref{eq:simrobot}, obtained using the developed method, and using pseudospectral numerical optimal control software}
    \label{table_four_state_cost}
    \begin{tabular}{p{10cm}p{2cm}} 
      \hline 
      Method & Cost \\
      \hline 
        BT MBRL with FCL  &  95.1490\\ 
        GPOPS II & 57.8740  \\
      \hline
    \end{tabular}
\end{table}
The following section details a one-at-a-time sensitivity analysis and study the sensitivity of the developed technique to changes in various tuning parameters.

\subsubsection{Sensitivity Analysis for the four state system}
The parameters $k_{c1}$, $k_{c2}$, $k_{a1}$, $k_{a2}$, $\beta$, and $v$ are selected for the sensitivity analysis. The costs of the trajectories, under the optimal feedback controller obtained using the developed method, are presented in Table \ref{table_four_state_sensitivity} for 5 different values of each parameter.
\begin{table}
    \centering
    \caption{Sensitivity Analysis for the four state system}
    \label{table_four_state_sensitivity}
    \begin{tabular}{p{2cm}p{2cm}p{2cm}p{2cm}p{2cm}p{2cm}p{2cm}} 
     \hline  
    $k_{c_1}$=  & 0.01 & 0.05 &  0.1 & 0.5  & 1    \\
     \hline  
     Cost   & 95.91 & 95.4185 &  95.1490 & 94.1607& 93.5487  \\ 
      \hline
      \hline  
       $k_{c_2}$=  & 1 &  5 & 10  & 20 & 30 \\
     \hline  
     Cost   &304.4&101.0786&95.1490&92.7148&93.729\\
      \hline
      \hline 
       $k_{a_1}$=  & 5 & 10 & 20  & 30 & 50  \\
    \hline 
     Cost &94.9464 &  95.1224 & 95.1490 & 95.1736 &95.1974 \\
      \hline
       \hline  
       $k_{a_2}$=  & 0.05 & 0.1  & 0.2  & 0.5 & 1 \\
    \hline 
     Cost   &95.2750&  95.2480&  95.1490&  94.9580&  94.6756\\
      \hline 
      \hline 
       $\beta$=  & 0.1 & 0.5  & 0.8 & 0.9 & 0.95 \\
     \hline  
     Cost   &125.33&109.7721&95.1490&92.91&93.7231\\
      \hline
      \hline 
       $v$=  & 50 &  70 & 100  & 125 & 150\\ 
     \hline 
     Cost &92.2836&93.34&95.1490&96.1926&97.9870 \\
      \hline
       \hline
    \end{tabular}
\end{table}
The parameters are varied in a neighborhood of the nominal values (selected through trial and error) $k_{c1} = 0.1$, $k_{c2} = 10$, $k_{a1} = 20$, $k_{a2} = 0.2$, $\beta = 0.8$, and $v = 100$. The value of $\beta_{1}$ is set to be $\text{diag}(100,100,100,100)$. The results in Table \ref{table_four_state_sensitivity} indicate that the developed method is not sensitive to small changes in the learning gains.

\begin{figure}
     \centering	
     \includegraphics[width=0.65\columnwidth]{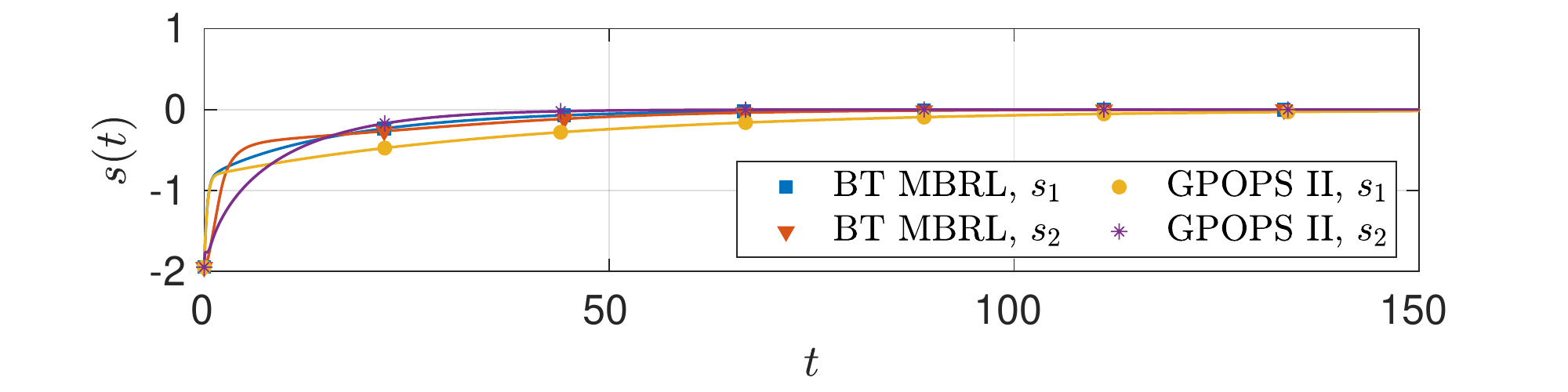}
     
     \includegraphics[width=0.65\columnwidth]{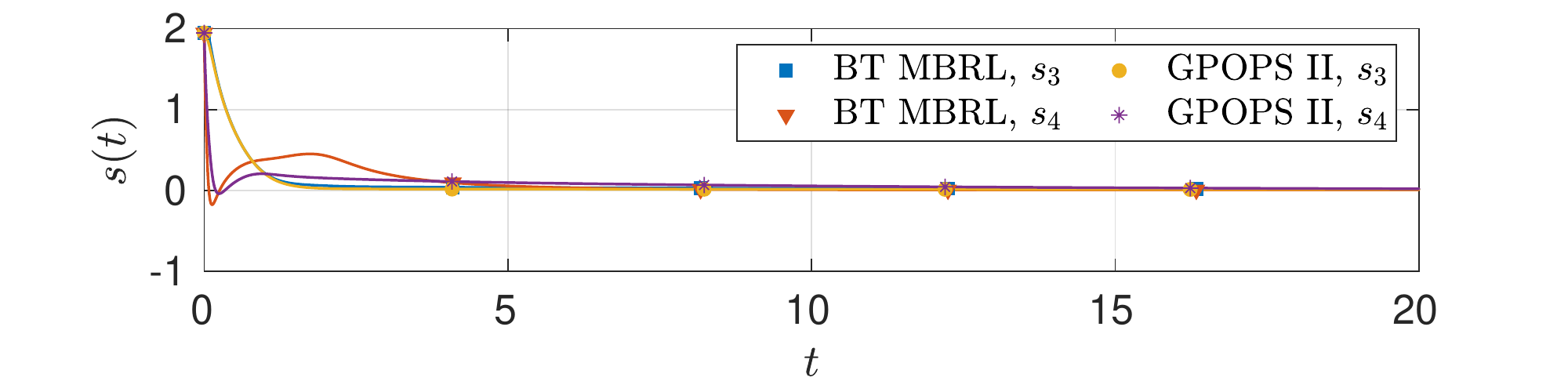}
     
			\caption{Comparison of the optimal angular position (top) and angular velocity (bottom) trajectories obtained using GPOPS II and BT MBRL with fixed optimal weights for the four-state dynamical system.}
  	\label{fig: Comparison_Optimal_robot}
\end{figure}

The results in Tables 2 and 4 indicate that the developed method is not sensitive to small changes in the learning gains. While reduced sensitivity to gains simplifies gain selection, as indicated by the local stability result, the developed method is sensitive to selection of basis function and initial guesses of the unknown weights. Due to high dimensionality of the vector of unknown weights, a complete characterization of the region of attraction is computationally difficult. As such, the basis functions and the initial guess were selected via trial and error.

\section{Conclusion}\label{Conclusion}
This paper develops a novel online safe control synthesis technique which relies on a nonlinear coordinate transformation that transforms a constrained optimal control problem into an unconstrained optimal control problem. A model of the system in the transformed coordinates is simultaneously learned and utilized to simulate experience. Simulated experience is used to realize convergent RL under relaxed excitation requirements. Safety of the closed-loop system, expressed in terms of box constraint, regulation of the system states to a neighborhood of the origin, and convergence of the estimated policy to a neighborhood of the optimal policy in transformed coordinates is established using a Lyapunov-based stability analysis.

While the main result of the paper states that the state is uniformly ultimately bounded, the simulation results hint towards asymptotic convergence of the part of the state that corresponds to the system trajectories, $x(\cdot)$. Proving such a result is a part of future research.

Limitations and possible extensions of the ideas presented in this paper revolve around the two key issues: (a) safety, and (b) online learning and optimization. The barrier function used in the BT to address safety can only ensure a fixed box constraint. A more generic and adaptive barrier function, constructed, perhaps, using sensor data is a subject for future research.

For optimal learning, parametric approximation techniques are used to approximate the value functions in this paper. Parametric approximation of the value function requires selection of appropriate basis functions which may be hard to find for the barrier-transformed dynamics. Developing techniques to systematically determine a set of basis functions for real-world systems is a subject for research.

The barrier transformation method to ensure safety relies on knowledge of the dynamics of the system. While this paper addresses parametric uncertainties, the BE method could potentially result in a safety violation due to unmodeled dynamics. In particular, the safety guarantees developed in this paper rely on the relationship (Lemma \ref{lem:trajectoryRelation}) between trajectories of the original dynamics and the transformed system, which holds in the presence of parametric uncertainty, but fails if a part of the dynamics is not included in the original model. Further research is needed to establish safety guarantees that are robust to unmodeled dynamics (for a differential games approach to robust safety, see \cite{SCC.Yang.Ding.ea2020}).

\pagebreak
\bibliographystyle{IEEETrans.bst}
\bibliography{scc,sccmaster,scctemp}

\begin{thebibliography}{10}
\def\url#1{}
\csname url@samestyle\endcsname
\providecommand{\newblock}{\relax}
\providecommand{\bibinfo}[2]{#2}
\providecommand{\BIBentrySTDinterwordspacing}{\spaceskip=0pt\relax}
\providecommand{\BIBentryALTinterwordstretchfactor}{4}
\providecommand{\BIBentryALTinterwordspacing}{\spaceskip=\fontdimen2\font plus
\BIBentryALTinterwordstretchfactor\fontdimen3\font minus
  \fontdimen4\font\relax}
\providecommand{\BIBforeignlanguage}[2]{{%
\expandafter\ifx\csname l@#1\endcsname\relax
\typeout{** WARNING: IEEEtran.bst: No hyphenation pattern has been}%
\typeout{** loaded for the language `#1'. Using the pattern for}%
\typeout{** the default language instead.}%
\else
\language=\csname l@#1\endcsname
\fi
#2}}
\providecommand{\BIBdecl}{\relax}
\BIBdecl

\bibitem{SCC.Sutton.Barto1998}
R.~S. Sutton and A.~G. Barto, \emph{Reinforcement learning: an
  introduction}.\hskip 1em plus 0.5em minus 0.4em\relax Cambridge, MA, USA: MIT
  Press, 1998.

\bibitem{SCC.Doya2000}
K.~Doya, ``Reinforcement learning in continuous time and space,'' \emph{Neural
  Comput.}, vol.~12, no.~1, pp. 219--245, 2000.

\bibitem{SCC.Kamalapurkar.Rosenfeld.ea2016}
\BIBentryALTinterwordspacing
R.~Kamalapurkar, J.~A. Rosenfeld, and W.~E. Dixon, ``Efficient model-based
  reinforcement learning for approximate online optimal control,''
  \emph{Automatica}, vol.~74, pp. 247--258, Dec. 2016.
  \url{http://www.sciencedirect.com/science/article/pii/S0005109816303272}
\BIBentrySTDinterwordspacing

\bibitem{SCC.Kamalapurkar.Walters.ea2016}
\BIBentryALTinterwordspacing
R.~Kamalapurkar, P.~Walters, and W.~E. Dixon, ``Model-based reinforcement
  learning for approximate optimal regulation,'' \emph{Automatica}, vol.~64,
  pp. 94--104, Feb. 2016.
  \url{http://www.sciencedirect.com/science/article/pii/S0005109815004392}
\BIBentrySTDinterwordspacing

\bibitem{SCC.Kamalapurkar.Walters.ea2018}
\BIBentryALTinterwordspacing
R.~Kamalapurkar, P.~Walters, J.~A. Rosenfeld, and W.~E. Dixon,
  \emph{Reinforcement learning for optimal feedback control: {A}
  {L}yapunov-based approach}, ser. Communications and Control
  Engineering.\hskip 1em plus 0.5em minus 0.4em\relax Springer International
  Publishing, 2018.  \url{https://www.springer.com/us/book/9783319783833}
\BIBentrySTDinterwordspacing

\bibitem{SCC.Modares.Lewis.ea2013}
H.~Modares, F.~L. Lewis, and M.-B. Naghibi-Sistani, ``Adaptive optimal control
  of unknown constrained-input systems using policy iteration and neural
  networks,'' \emph{IEEE Trans. Neural Netw. Learn. Syst.}, vol.~24, no.~10,
  pp. 1513--1525, 2013.

\bibitem{SCC.Kiumarsi.Lewis.ea2014}
B.~Kiumarsi, F.~L. Lewis, H.~Modares, A.~Karimpour, and M.-B. Naghibi-Sistani,
  ``Reinforcement {Q}-learning for optimal tracking control of linear
  discrete-time systems with unknown dynamics,'' \emph{Automatica}, vol.~50,
  no.~4, pp. 1167--1175, Apr. 2014.

\bibitem{SCC.Qin.Zhang.ea2014}
C.~Qin, H.~Zhang, and Y.~Luo, ``Online optimal tracking control of
  continuous-time linear systems with unknown dynamics by using adaptive
  dynamic programming,'' \emph{Int. J. Control}, vol.~87, no.~5, pp.
  1000--1009, 2014.

\bibitem{SCC.Vamvoudakis.Vrabie.ea2009}
K.~Vamvoudakis, D.~Vrabie, and F.~L. Lewis, ``Online policy iteration based
  algorithms to solve the continuous-time infinite horizon optimal control
  problem,'' in \emph{IEEE Symp. Adapt. Dyn. Program. Reinf. Learn.}, 2009, pp.
  36--41.

\bibitem{SCC.Lewis.Vrabie2009}
F.~L. Lewis and D.~Vrabie, ``Reinforcement learning and adaptive dynamic
  programming for feedback control,'' \emph{IEEE Circuits Syst. Mag.}, vol.~9,
  no.~3, pp. 32--50, 2009.

\bibitem{SCC.Bertsekas.ea2011}
D.~P. Bertsekas, ``Dynamic programming and optimal control 3rd edition, volume
  {II},'' \emph{Belmont, MA: Athena Scientific}, 2011.

\bibitem{SCC.Bhasin.Kamalapurkar.ea2012}
\BIBentryALTinterwordspacing
S.~Bhasin, R.~Kamalapurkar, M.~Johnson, K.~G. Vamvoudakis, F.~L. Lewis, and
  W.~E. Dixon, ``An actor-critic-identifier architecture for adaptive
  approximate optimal control,'' in \emph{Reinforcement Learning and
  Approximate Dynamic Programming for Feedback Control}, ser. IEEE Press Series
  on Computational Intelligence, F.~L. Lewis and D.~Liu, Eds.\hskip 1em plus
  0.5em minus 0.4em\relax Wiley and IEEE Press, Feb. 2012, pp. 258--278.
  \url{http://onlinelibrary.wiley.com/doi/10.1002/9781118453988.ch12/summary}
\BIBentrySTDinterwordspacing

\bibitem{SCC.Liu.Wei2014}
D.~Liu and Q.~Wei, ``Policy iteration adaptive dynamic programming algorithm
  for discrete-time nonlinear systems,'' \emph{IEEE Trans. Neural Netw. Learn.
  Syst.}, vol.~25, no.~3, pp. 621--634, Mar. 2014.

\bibitem{SCC.He.Li.ea2017}
W.~{He}, Z.~{Li}, and C.~L.~P. {Chen}, ``A survey of human-centered intelligent
  robots: issues and challenges,'' \emph{IEEE/CAA J. Autom. Sin.}, vol.~4,
  no.~4, pp. 602--609, 2017.

\bibitem{SCC.Vamvoudakis.Miranda.ea2016}
K.~G. {Vamvoudakis}, M.~F. {Miranda}, and J.~P. {Hespanha}, ``Asymptotically
  stable adaptive-optimal control algorithm with saturating actuators and
  relaxed persistence of excitation,'' \emph{IEEE Trans. Neural Netw. Learn.
  Syst.}, vol.~27, no.~11, pp. 2386--2398, 2016.

\bibitem{SCC.Walters.Kamalapurkar.ea2015}
\BIBentryALTinterwordspacing
P.~Walters, R.~Kamalapurkar, and W.~E. Dixon, ``Approximate optimal online
  continuous-time path-planner with static obstacle avoidance,'' in \emph{Proc.
  IEEE Conf. Decis. Control}, Osaka, Japan, Dec. 2015, pp. 650--655.
  \url{http://ieeexplore.ieee.org/xpl/articleDetails.jsp?arnumber=7402303}
\BIBentrySTDinterwordspacing

\bibitem{SCC.Deptula.ea2020}
P.~{Deptula}, H.~{Chen}, R.~A. {Licitra}, J.~A. {Rosenfeld}, and W.~E. {Dixon},
  ``Approximate optimal motion planning to avoid unknown moving avoidance
  regions,'' \emph{IEEE Transactions on Robotics}, vol.~36, no.~2, pp.
  414--430, 2020.

\bibitem{SCC.Ames.Xu.ea2017}
A.~D. Ames, X.~Xu, J.~W. Grizzle, and P.~Tabuada, ``Control barrier function
  based quadratic programs for safety critical systems,'' \emph{IEEE Trans.
  Autom. Control}, vol.~62, no.~8, pp. 3861--3876, Aug. 2017.

\bibitem{SCC.Choi.Castaneda.ea2020}
J.~Choi, F.~Castañeda, C.~Tomlin, and K.~Sreenath, ``{Reinforcement Learning
  for Safety-Critical Control under Model Uncertainty, using Control Lyapunov
  Functions and Control Barrier Functions},'' in \emph{Proc. Robot. Sci.
  Syst.}, Corvalis, Oregon, USA, July 2020.

\bibitem{SCC.Cohen.ea2020}
M.~H. {Cohen} and C.~{Belta}, ``Approximate optimal control for safety-critical
  systems with control barrier functions,'' in \emph{Proc. IEEE Conf. Decis.
  Control}, 2020, pp. 2062--2067.

\bibitem{SCC.Marvi.Kiumarsi.ea2020}
\BIBentryALTinterwordspacing
Z.~Marvi and B.~Kiumarsi, ``Safe reinforcement learning: A control barrier
  function optimization approach,'' \emph{Int. J. Robust Nonlinear Control},
  vol.~31, no.~6, pp. 1923--1940, 2021.
  \url{https://onlinelibrary.wiley.com/doi/abs/10.1002/rnc.5132}
\BIBentrySTDinterwordspacing

\bibitem{SCC.Yang.Vamvoudakis.ea2019}
Y.~Yang, K.~G. Vamvoudakis, H.~Modares, W.~He, Y.-X. Yin, and D.~Wunsch,
  ``Safety-aware reinforcement learning framework with an actor-critic-barrier
  structure,'' in \emph{Proc. Am. Control Conf.}, 2019, pp. 2352--2358.

\bibitem{SCC.Graichen.Petit.ea2009}
\BIBentryALTinterwordspacing
K.~Graichen and N.~Petit, ``Incorporating a class of constraints into the
  dynamics of optimal control problems,'' \emph{Optimal Control Applications
  and Methods}, vol.~30, no.~6, pp. 537--561, 2009.
  \url{https://onlinelibrary.wiley.com/doi/abs/10.1002/oca.880}
\BIBentrySTDinterwordspacing

\bibitem{SCC.Bechlioulis.Rovithakis.ea2009}
\BIBentryALTinterwordspacing
C.~P. Bechlioulis and G.~A. Rovithakis, ``Adaptive control with guaranteed
  transient and steady state tracking error bounds for strict feedback
  systems,'' \emph{Automatica}, vol.~45, no.~2, pp. 532 -- 538, 2009.
  \url{http://www.sciencedirect.com/science/article/pii/S0005109808004676}
\BIBentrySTDinterwordspacing

\bibitem{SCC.Greene.Deptula.ea2020}
M.~L. {Greene}, P.~{Deptula}, S.~{Nivison}, and W.~E. {Dixon}, ``Sparse
  learning-based approximate dynamic programming with barrier constraints,''
  \emph{IEEE Control Syst. Lett.}, vol.~4, no.~3, pp. 743--748, 2020.

\bibitem{SCC.Yang.Ding.ea2020}
\BIBentryALTinterwordspacing
Y.~Yang, D.-W. Ding, H.~Xiong, Y.~Yin, and D.~C. Wunsch, ``Online
  barrier-actor-critic learning for h{$\infty$} control with full-state
  constraints and input saturation,'' \emph{J. Franklin Inst.}, vol. 357,
  no.~6, pp. 3316 -- 3344, 2020.
  \url{http://www.sciencedirect.com/science/article/pii/S0016003219309044}
\BIBentrySTDinterwordspacing

\bibitem{SCC.Sanfelice2021}
R.~G. Sanfelice, \emph{Hybrid Feedback Control}.\hskip 1em plus 0.5em minus
  0.4em\relax Princeton University Press, 2021.

\bibitem{SCC.Adetola.Guay2008}
V.~Adetola and M.~Guay, ``Finite-time parameter estimation in adaptive control
  of nonlinear systems,'' \emph{IEEE Trans. Autom. Control}, vol.~53, no.~3,
  pp. 807--811, 2008.

\bibitem{SCC.Roy.Bhasin.ea2016}
S.~B. Roy, S.~Bhasin, and I.~N. Kar, ``Parameter convergence via a novel
  {PI}-like composite adaptive controller for uncertain {E}uler-{L}agrange
  systems,'' in \emph{Proc. IEEE Conf. Decis. Control}, Dec. 2016, pp.
  1261--1266.

\bibitem{SCC.Bhasin.Kamalapurkar.ea2013a}
\BIBentryALTinterwordspacing
S.~Bhasin, R.~Kamalapurkar, M.~Johnson, K.~G. Vamvoudakis, F.~L. Lewis, and
  W.~E. Dixon, ``A novel actor-critic-identifier architecture for approximate
  optimal control of uncertain nonlinear systems,'' \emph{Automatica}, vol.~49,
  no.~1, pp. 89--92, Jan. 2013.
  \url{http://www.sciencedirect.com/science/article/pii/S0005109812004827}
\BIBentrySTDinterwordspacing

\bibitem{SCC.Vrabie.Lewis.ea2010}
\BIBentryALTinterwordspacing
D.~Vrabie and F.~Lewis, ``Online adaptive optimal control based on
  reinforcement learning,'' in \emph{Optimization and Optimal Control: Theory
  and Applications}, A.~Chinchuluun, P.~M. Pardalos, R.~Enkhbat, and
  I.~Tseveendorj, Eds.\hskip 1em plus 0.5em minus 0.4em\relax New York, NY:
  Springer New York, 2010, pp. 309--323.
  \url{https://doi.org/10.1007/978-0-387-89496-6_16}
\BIBentrySTDinterwordspacing

\bibitem{SCC.Modares.Lewis.ea2014}
H.~Modares, F.~L. Lewis, and M.-B. Naghibi-Sistani, ``Integral reinforcement
  learning and experience replay for adaptive optimal control of
  partially-unknown constrained-input continuous-time systems,''
  \emph{Automatica}, vol.~50, no.~1, pp. 193--202, 2014.

\bibitem{SCC.Greene.Deptula.ea2021}
\BIBentryALTinterwordspacing
M.~L. Greene, P.~Deptula, R.~Kamalapurkar, and W.~E. Dixon, ``Mixed density
  methods for approximate dynamic programming,'' in \emph{Handbook of
  Reinforcement Learning and Control}, ser. Studies in Systems, Decision and
  Control, K.~G. Vamvoudakis, Y.~Wan, F.~Lewis, and D.~Cansever, Eds.\hskip 1em
  plus 0.5em minus 0.4em\relax Cham: Springer International Publishing, 2021,
  vol. 325, ch.~5, pp. 139--172.
  \url{https://link.springer.com/chapter/10.1007/978-3-030-60990-0_5}
\BIBentrySTDinterwordspacing

\bibitem{SCC.Sauvigny2012}
F.~Sauvigny, \emph{Partial Differential Equations 1}.\hskip 1em plus 0.5em
  minus 0.4em\relax Springer, 2012.

\bibitem{SCC.GPOPS}
\BIBentryALTinterwordspacing
M.~A. Patterson and A.~V. Rao, ``{GPOPS-II}: A {MATLAB} software for solving
  multiple-phase optimal control problems using hp-adaptive gaussian quadrature
  collocation methods and sparse nonlinear programming,'' \emph{ACM Trans.
  Math. Softw.}, vol.~41, no.~1, Oct. 2014.
  \url{https://doi.org/10.1145/2558904}
\BIBentrySTDinterwordspacing

\bibitem{SCC.Khalil2002}
H.~K. Khalil, \emph{Nonlinear systems}, 3rd~ed.\hskip 1em plus 0.5em minus
  0.4em\relax Upper Saddle River, NJ: Prentice Hall, 2002.

\end{thebibliography}
\pagebreak
\begin{appendix}

\begin{manuallemma}{\ref{lem:trajectoryRelation}}
    If $t \mapsto \Phi\big(t,b(x^{0}),\zeta\big)$ is a complete Carath\'{e}odory solution to \eqref{eq:BTDynamics}, starting from the initial condition $b(x^{0})$, under the feedback policy $(s,t) \mapsto \zeta (s,t)$ and $t \mapsto \Lambda(t,x^{0},\xi)$ is a Carath\'{e}odory solution to \eqref{eq:Dynamics}, starting from the initial condition $x^{0}$, under the feedback policy $(x,t) \mapsto \xi(x,t)$, defined as $\xi(x,t) = \zeta(b(x),t)$, then $\Lambda(\cdot,x^{0}, \xi)$ is complete and $ \Lambda(t,x^{0}, \xi) = b^{-1}\left(\Phi(t,b(x^{0}),\zeta)\right) $ for all $t \in \mathbb{R}_{\geq 0}$.
\end{manuallemma}
\begin{proof}
	Note that since $t \mapsto \Phi\big(t,b(x^{0}),\zeta\big)$ is a complete Carath\'{e}odory solution to $ \dot{s} = F(s) + G(s) \zeta(s,t) $, it is differentiable at almost all $ t \in \mathbb{R}_{\geq 0} $. Since $ b^{-1} $ is smooth, $ t \mapsto b^{-1}\left(\Phi\left(t,b(x^{0}),\zeta\right)\right) $ is also differentiable at almost all $ t \in \mathbb{R}_{\geq 0} $. That is,
	\begin{align*}
		\frac{\mathrm{d}(b^{-1}\circ\Phi_i)}{\mathrm{d}t}\left(t,b(x^{0}),\zeta\right) =\frac{\mathrm{d} b^{-1}_{(a_i,A_i)}(y)}{\mathrm{d} y}|_{y=\Phi_{i}\left(t,b\left(x^{0}\right), \zeta\right)} \frac{\mathrm{d} \Phi_{i}}{\mathrm{d} t}\left(t,b(x^{0}),\zeta\right), 
	\end{align*}
	for almost all $t \in \mathbb{R}_{\geq 0}$ and all $i=1,\cdots,n$, where $\Phi_i$ denotes the $i$th component of $\Phi$. As a result,
	\begin{align*}
		\frac{\mathrm{d}(b^{-1}\circ\Phi_i)}{\mathrm{d}t}\left(t,b(x^{0}),\zeta\right) =  \frac{\left(F\left(\Phi\left(t,b(x^{0}),\zeta\right)\right)\right)_i}{B_i\left(\Phi_i\left(t,b(x^{0}),\zeta\right)\right)} 
		+ \frac{\left(G\left(\Phi\left(t,b(x^{0}),\zeta \right)\right)\right)_i \zeta\left(\Phi\left(t,b(x^{0}),\zeta\right),t\right)}{B_i\left(\Phi_i\left(t,b(x^{0}),\zeta\right)\right)},
    \end{align*}
		for almost all $ t \in \mathbb{R}_{\geq 0} $ and all $i=1,\hdots,n$. By the construction of $ F $, $ G $, and $ \xi $,
	\begin{multline*}
		\frac{\mathrm{d}(b^{-1}\circ\Phi)}{\mathrm{d}t}\left(t,b(x^{0}),\zeta\right)  = f\left(b^{-1}\circ\Phi\left(t,b(x^{0}),\zeta\right)\right)\theta\\+g\left(b^{-1}\circ\Phi\left(t,b(x^{0}),\zeta\right)\right) \xi\left(b^{-1}\circ\Phi\left(t,b(x^{0}),\zeta\right),t\right),
	\end{multline*}
	for almost all $ t \in \mathbb{R}_{\geq 0} $. Clearly $t \mapsto b^{-1}\circ\Phi\left(t,b(x^{0}),\zeta\right)$ is a Carath\'{e}odory solution of \eqref{eq:Dynamics} on $\mathbb{R}_{\geq 0}$, starting from the initial condition $ b^{-1}\big(b(x^{0})\big)= x^{0}$ under the feedback policy $ (x,t)\mapsto \xi(x,t) $. By uniqueness of solutions $ \dot{x} = f(x)\theta + g(x) \xi(x,t) $ (which follows from local Lipschitz continuity of $ f $, $ g $, and $ b $ inside the barrier), $\Lambda(\cdot,x^0,\xi)$ is complete and $\Lambda(t,x^0,\xi) = b^{-1}\left(\Phi\left(t,b(x^{0}),\zeta\right)\right)$ for all $ t \in \mathbb{R}_{\geq 0} $.
\end{proof}

\begin{manuallemma}{\ref{existence_Caratheodory}}
    If $\|Y_{f}\|$ is non-decreasing in time then \eqref{z_dot} admits Carath\'{e}odory solutions.
\end{manuallemma}
\begin{proof}
    Since $\|Y_{f}(0)\| = 0,$ given any piecewise continuous control signal $t \mapsto u(t)$ and initial conditions $s^{0}$ and $\theta^{0}$, the Cauchy problem $\dot{z} = h_{1}(z,u)$, $z(0) = z^{0} = [s^{0};0;0;0;0;\theta^{0}]$ admits a unique Carath\'{e}odory solution $t \mapsto z_{1}(0,z^{0})$ over $[0,t^{*})$, with $t^{*} = \min (t_{1},t_{2})$, where $t_{1}$ = $\inf \{t \in \mathbb{R}_{\geq 0} \ |  \ \|Y_{f1}(t,z^{0}\| = \overline{Y_{f}}\}$ and $t_{2}$ = $\inf\{t \in \mathbb{R}_{\geq 0} | \lim_{\tau \mapsto t} \|z_{1}(\tau,z^{0})\| = \infty$\}, where $Y_{f1}$ denotes the $Y_{f}$ component of $z_{1}$.
    
    Given any $(t',z') \in \mathbb{R}_{\geq 0} \times \mathbb{R}^{2n+2p+p^{2}+np}$, the Cauchy problem $\dot{z} = h_{2}(z,u), z(t') = z'$, also admits a unique Carath\'{e}odory solution $t \mapsto z_{2}(t;t',z')$ over $[t',t^{**})$ where $t^{**}$ = $\min\Big(\infty,\big(\inf\{t \in \mathbb{R}_{\geq t'} | \lim_{\tau \mapsto t} \|z_{2}(\tau,b',z')\| = \infty\}\big)\Big)$.
    
    If $t^{*} = t_{2}$ then $t \mapsto z_{1}(t,z^{0})$ is also a unique Carath\'{e}odory solution to the Cauchy problem $\dot{z}= h(z,u)$, $z(0)=z^{0}$. If not, then
    \begin{equation*}
        t \mapsto z^{*}(t,z^{0}) = \begin{cases}z_{1}(t,z^{0}), \quad t < t_{1}, \\ z_{2}\big(t,t_{1}, \lim_{\tau \uparrow t_{1}} z_{1}(\tau, z^{0})\big), &t \geq t_{1},\end{cases} 
    \end{equation*} is a unique Carath\'{e}odory solution to the Cauchy problem $\dot{z}= h(z,u)$, $z(0)=z^{0}$.
\end{proof}
\begin{manualtheorem}{\ref{thm1}}
    Provided Assumptions \ref{ass:finite_excitation}, \ref{ass:known_controller}, and \ref{ass:CLBCADPLearnCond} hold, the gains are selected large enough based on \eqref{eq:gain1} - \eqref{eq:gain4}, and the weights $\hat{\theta}$, $\hat{W}_{c}$, $\Gamma$, and $\hat{W}_{a}$ are updated according to \eqref{theta_update}, \eqref{W_c}, \eqref{gamma}, and \eqref{W_a}, respectively, then the estimation errors $\tilde{W}_{c}$, $\tilde{W}_{a}$, and $\tilde{\theta}$ and the trajectories of the transformed system in \eqref{eq:BTDynamics} under the controller in \eqref{u_command} are locally uniformly ultimately bounded.
\end{manualtheorem}
\begin{proof}
Under Assumption \ref{ass:finite_excitation}, the state trajectories are bounded over the interval $ [0,T) $. Over the interval $[T,\infty)$, let $B_{r}\subset\mathbb{R}^{n+2L+p}$ denote a closed ball with radius $r$ centered at the origin. Let $\chi$ denote the projection of $B_{r}$ onto $\mathbb{R}^{n}$. For any continuous function $h:\mathbb{R}^{n}\to\mathbb{R}^{m}$, let the notation $\overline{\left\Vert \left(\cdot\right)\right\Vert }$ be defined as $\overline{\left\Vert h\right\Vert }\coloneqq\sup_{s^{o}\in\chi}\left\Vert h\left(s^{o}\right)\right\Vert$. To facilitate the analysis, let $\left\{ \varpi_{j}\in\mathbb{R}_{>0}\mid j=1,\cdots,7\right\}$ be constants such that $\varpi_{1}+\varpi_{2}+\varpi_{3}=1$, and $\varpi_{4}+\varpi_{5}+\varpi_{6}+\varpi_{7}=1$. Let $\underline{c}\in\mathbb{R}_{>0}$ be a constant defined as
\begin{equation} \label{c1}
\underline{c}\coloneqq\frac{\beta}{2\overline{\Gamma}k_{c2}}+\frac{\underline{c}_{3}}{2},
\end{equation}
$k_{5}$ be a positive constant defined as $k_{5} \coloneqq \bar{W}K_{c1}\overline{\|\nabla_{s}\sigma\|} L_{y}$.
and let $\iota\in\mathbb{R}$ be a positive constant defined as
\begin{multline} \label{iota}
\iota\triangleq\frac{\left(k_{c1}+k_{c2}\right)^{2}\overline{\left\Vert \Delta\right\Vert }^{2}}{4k_{c2}\underline{c}\varpi_{3}}+\frac{1}{4}\overline{\left\Vert G_{\epsilon}\right\Vert }  + \frac{1}{{4\left(k_{a1}+k_{a2}\right)\varpi_{6}}}   \left(\frac{1}{2}\overline{W}\overline{\left\Vert G_{\sigma}\right\Vert }+\frac{1}{2}\overline{\left\Vert \nabla_{s}\epsilon G^{T}\nabla_{s}\sigma^{T}\right\Vert}\right) \\ + \frac{1}{{4\left(k_{a1}+k_{a2}\right)\varpi_{6}}} \left(k_{a2}\overline{W} +\frac{1}{4}\left(k_{c1}+k_{c2}\right)\overline{W}^{2}\overline{\left \Vert G_{\sigma}\right \Vert }\right)^{2}. 
\end{multline}
To facilitate the stability analysis, let $V_{L}:\mathbb{R}^{n+2L+p}\times\mathbb{R}_{\geq 0} \to \mathbb{R}_{\geq 0}$ be a continuously differentiable candidate Lyapunov function defined as 
\begin{equation}\label{v_L0}
V_{L}\left(Z,t\right)\coloneqq V^{*}(s)+\frac{1}{2}\tilde{W}_{c}^{T}\Gamma^{-1}(t)\tilde{W}_{c}+\frac{1}{2}\tilde{W}_{a}^{T}\tilde{W}_{a}+V_{1}(\tilde{\theta}),
\end{equation} where $V^{*}$ is the optimal value function, $V_{1}$ was introduced in section \ref{para}  and $Z\triangleq\left[s;\:\tilde{W}_{c};\:\tilde{W}_{a};\:\tilde{\theta}\right]$. 
The update law in \eqref{W_c} ensures that the adaptation gain matrix is bounded such that
\begin{equation}
    \underline{\Gamma} \leq \|\Gamma (t) \| \leq \overline{\Gamma}, \forall t \in \mathbb{R}_{\geq T}.
\end{equation}
Using the fact that $V^{*} \ \text{and} \ V_{1}$ are positive definite, Lemma 4.3 from \cite{SCC.Khalil2002} yields 
\begin{equation} \label{eq:v_l}    \underline{v_{l}}\left(\left\Vert Z\right\Vert \right)\leq V_{L}\left(Z,t\right)\leq\overline{v_{l}}\left(\left\Vert Z\right\Vert \right),
\end{equation}
for all $t\in\mathbb{R}_{\geq T}$ and for all $Z\in\mathbb{R}^{n+2L+p}$, where $\underline{v_{l}},\overline{v_{l}}:\mathbb{R}_{\geq0}\rightarrow\mathbb{R}_{\geq0}$ are class $\mathcal{K}$ functions. 
Let $v_{l}:\mathbb{R}_{\geq0}\rightarrow\mathbb{R}_{\geq0}$ be a  function defined as 
$v_{l}\left(\left\Vert Z\right\Vert \right) \coloneqq \frac{\lambda_{\min}\{{Q\}}\|s\|^{2}}{2}+\frac{k_{c2}\underline{c}\varpi_{1}}{2}\left\Vert \tilde{W}_{c}\right\Vert ^{2}+\frac{\left(k_{a1}+k_{a2}\right)\varpi_{4}}{2}\left\Vert \tilde{W}_{a}\right\Vert ^{2}+\frac{\left\Vert \tilde{\theta}\right\Vert ^{2}}{2}.$

The sufficient conditions for ultimate boundedness of $Z$ are derived based on the subsequent stability analysis as
\begin{align}\label{eq:gain1}
    \bigg(k_{c2}\underline{c}\varpi_{2} -\frac{k_{5} r\epsilon}{2}\bigg)(k_{a1}+k_{a2})\varpi_{5}  \geq   \bigg(k_{a1}  + \frac{1}{4}\left(k_{c1}+k_{c2}\right)\overline{W}\overline{\left\Vert G_{\sigma}\right\Vert }\bigg),
\end{align}
\begin{equation}\label{eq:gain2}
    \left(k_{a1}+k_{a2}\right)\varpi_{7}\geq\frac{1}{4}\left(k_{c1}+k_{c2}\right)\overline{W}\overline{\left\Vert G_{\sigma}\right\Vert },
\end{equation}
\begin{equation}\label{eq:gain3}
    \lambda_{\min}\{Y_{f}(T)\} \geq \frac{k_{5}r}{2\epsilon}+1,
\end{equation}
\begin{equation}\label{eq:gain4}
    v_{l}^{-1}(\iota) < \overline{v_{l}}^{-1}(\underline{v_{l}}(r)).
\end{equation}
The bound on the function $F$ and the NN function approximation errors depend on the underlying compact set; hence, $\iota$ is a function of $r$. Even though, in general, $\iota$ increases with increasing $r$, the sufficient condition in \eqref{eq:gain4} can be satisfied provided the points for BE extrapolation are selected such that the constant $\underline{c}$, introduced in \eqref{c1} is large enough and that the basis for value function approximation are selected such that $\overline{\left\Vert \epsilon\right\Vert }$ and $\overline{\left\Vert \nabla\epsilon\right\Vert }$ are small enough.

The differential equation \eqref{eq:BTDynamics}, under the controller in \eqref{u_command}, along with \eqref{theta_update}, \eqref{W_c}, and \eqref{W_a}, constitute the closed-loop system $ \dot{Z} = h(Z,t) $ to be analyzed. Let $\dot{V}_L$ denote the orbital derivative of \eqref{v_L0} along the trajectories of the closed-loop system, i.e., $\dot{V}_L := \nabla_t V_L + \nabla_Z V_L(Z,t)h(Z,t)$. Then,
\begin{align}\label{lyapunovfunctionderivation}
    \dot{V}_L = \nabla_{s}V^{*}F+\nabla_{s}V^{*}G\hat{u}+{\tilde{W}}_{c}^{T}\Gamma^{-1}\dot{\tilde{W}}_{c}+\frac{1}{2}\tilde{W}_{c}^{T}\dot{\Gamma}^{-1}{\tilde{W}}_{c}+{\tilde{W}}_{a}^{T}\dot{\tilde{W}}_{a}+\dot{V}_{1}.
\end{align}
Substituting \eqref{W_c} - \eqref{W_a} in \eqref{lyapunovfunctionderivation} yields
\begin{multline}\label{lyapunovfunctionderivation11}
    \dot{V}_{L} \leq \nabla_{s}V^{*}\left(F+Gu^{*}\right)-\nabla_{s}V^{*}Gu^{*}+\nabla_{s}V^{*}G\hat{u}-\tilde{W}_{c}^{T}\Gamma^{-1}\bigg(-k_{c1}\Gamma\frac{\omega}{\rho}{\hat{\delta}}-\frac{1}{N}\Gamma\sum_{k=1}^{N}\frac{k_{c2}\omega_{i}}{\rho_{k}}{\hat{\delta}}_{k}\bigg) \\	-\frac{1}{2}\tilde{W}_{c}^{T}\Gamma^{-1}\bigg(\beta\Gamma-k_{c1}(\Gamma\frac{\omega\omega^{T}}{\rho^{2}}\Gamma)-\frac{k_{c2}}{N}\Gamma\sum_{k=1}^{N}\frac{\omega_{k}\omega_{k}^{T}}{\rho_{k}^{2}}\Gamma\bigg)\Gamma^{-1}\tilde{W}_{c} \\
	-\tilde{W}_{a}^{T}\bigg(-k_{a1}(\hat{W}_{a}-\hat{W}_{c})-k_{a2}\hat{W}_{a}+\big((\frac{k_{c1}\omega}{4\rho}\hat{W}_{a}^{T}G_{\sigma}+\sum_{k=1}^{N}\frac{k_{c2}\omega_{k}}{4N\rho_{k}}\hat{W}_{a}^{T}G_{\sigma k})^{T}\hat{W}_{c}\big)\bigg)+\dot{V}_{1}.
\end{multline}

The Lyapunov derivative can be rewritten as 
\begin{multline}\label{lyapunovfunctionderivation12}
\dot{V}_{L} \leq -s^{T}Qs -\frac{1}{4}W^{T}G_{\sigma}W+\frac{1}{2}W^{T}G_{\sigma}\tilde{W}_{a}+\frac{1}{4}G_{\epsilon}+\frac{1}{2}\tilde{W}_{a}^{T}\nabla_{s}\sigma G_{R}\nabla_{s}\epsilon^{T} 
	r\\-\tilde{W}_{c}^{T}\Gamma^{-1}\bigg(-k_{c1}\Gamma\frac{\omega}{\rho}(-\omega^{T}\tilde{W}_{c}+\frac{1}{4}\tilde{W}_{a}^{T}G_{\sigma}\tilde{W}_{a}-W^{T} \nabla_{s} \sigma y\tilde{\theta}+\frac{1}{2}W^{T}\nabla_{s}\sigma G_{R}\nabla_{s}\epsilon^{T}+\frac{1}{4}G_{\epsilon}-\nabla_{s}\epsilon F)\bigg) \\
	+\tilde{W}_{c}^{T}\Gamma^{-1}\bigg(\frac{1}{N}\Gamma\sum_{k=1}^{N}\frac{k_{c2}\omega_{k}}{\rho_{k}}\big(-\omega_{k}^{T}\tilde{W}_{c}+\frac{1}{4}\tilde{W}_{a}^{T}G_{\sigma k}\tilde{W}_{a}-(W^{T} \nabla_{s} \sigma_{k} y_{k}\tilde{\theta})+{\Delta}_{k}\big)\bigg) \\
	-\frac{\beta}{2}\tilde{W}_{c}^{T}\Gamma^{-1}\tilde{W}_{c}+\frac{1}{2}k_{c1}\tilde{W}_{c}^{T}\frac{\omega\omega^{T}}{\rho^{2}}\tilde{W}_{c}\\+\frac{1}{2}k_{c2}\tilde{W}_{c}^{T}\frac{1}{N}\sum_{k=1}^{N}\frac{\omega_{k}\omega_{k}^{T}}{\rho_{k}^{2}}\tilde{W}_{c}+k_{a1}\tilde{W}_{a}^{T}\tilde{W}_{c}-(k_{a1}+k_{a2})\tilde{W}_{a}^{T}\tilde{W}_{a} 
	+ k_{a2}\tilde{W}_{a}^{T}W\\-\tilde{W}_{a}^{T}\Bigg(\bigg(\frac{k_{c1}\omega}{4\rho}\hat{W}_{a}^{T}G_{\sigma}+\sum_{k=1}^{N}\frac{k_{c2}\omega_{k}}{4N\rho_{k}}\hat{W}_{a}^{T}G_{\sigma k}\bigg)^{T}\hat{W}_{c}\Bigg)-\lambda_{\min}\{Y_{f}\} \|\tilde{\theta}\|^{2}.
\end{multline}

Using Young's inequality, Cauchy-Schwarz inequality, and completion of squares, \eqref{lyapunovfunctionderivation12} can be bounded as 
\begin{multline}
    \dot{V}_{L}\leq-s^{T}Qs -k_{c2}\underline{c}\left(\varpi_{1}+\varpi_{2}+\varpi_{3}\right)\left\Vert \tilde{W}_{c}\right\Vert ^{2} -\left(k_{a1}+k_{a2}\right)\left(\varpi_{4}+\varpi_{5}+\varpi_{6}+\varpi_{7}\right)\left\Vert \tilde{W}_{a}\right\Vert ^{2}\\
    + \left(\frac{1}{2}\overline{W}\overline{\left\Vert G_{\sigma}\right\Vert }+\frac{1}{2}\left\Vert \nabla_{s}\epsilon G^{T}\nabla_{s}\sigma^{T}\right\Vert +k_{a2}\overline{W}+\frac{1}{4}\left(k_{c1}+k_{c2}\right)\overline{W}^{2}\overline{\left\Vert G_{\sigma}\right\Vert }\right)\left\Vert \tilde{W}_{a}\right\Vert\\
    +\left\Vert \tilde{W}_{c}\right\Vert \left(\left(k_{c1}+k_{c2}\right)\left\Vert \hat{\delta}\right\Vert \right)
    +\left(k_{a1}+\frac{1}{4}\left(k_{c1}+k_{c2}\right)\overline{W}\overline{\left\Vert G_{\sigma}\right\Vert }\right)
    \left\Vert \tilde{W}_{a}\right\Vert \left\Vert \tilde{W}_{c}\right\Vert
    +\frac{1}{4}\overline{\left \Vert G_{\epsilon} \right\Vert }\\
    +\frac{1}{4}\left(k_{c1}+k_{c2}\right)\overline{W}\overline{
    \left\Vert G_{\sigma}\right\Vert }\left\Vert \tilde{W}_{a}\right\Vert ^{2} 
    -\lambda_{\min}\{Y_{f}\} \|\tilde{\theta}\|^{2}  + \left(k_{5} r \right)\left( \frac{\|\tilde{\theta}\|^{2}}{2\epsilon} + \frac{\epsilon \|\tilde{W}_{c}\|^{2}}{2}\right).
\end{multline}
Provided the gains are selected based on the sufficient conditions in \eqref{eq:gain1}, \eqref{eq:gain2}, \eqref{eq:gain3} and \eqref{eq:gain4}, the orbital derivative can be upper-bounded as
\begin{equation}\label{eq:V_L}
    \dot{V}_{L}	\leq-v_{l}\left(\left\Vert Z\right\Vert \right),\quad\forall\left\Vert Z\right\Vert \geq v_{l}^{-1}\left(\iota\right),
\end{equation}
for all $t\geq T$ and $\forall Z\in B_{r}$. Using \eqref{eq:v_l}, \eqref{eq:gain4}, and \eqref{eq:V_L}, Theorem 4.18 in \cite{SCC.Khalil2002} can then be invoked to conclude that $Z$ is locally uniformly ultimately bounded in the sense that all trajectories starting from initial conditions bounded by $\left\Vert Z(T) \right\Vert \leq  \overline{v_{l}}^{-1}(\underline{v_{l}}(r))$, satisfy $\lim\sup_{t\to\infty}\left\Vert Z\left(t\right)\right\Vert \leq\underline{v_{l}}^{-1}\left(\overline{v_{l}}\left(v_{l}^{-1}\left(\iota\right)\right)\right).$ Furthermore, the concatenated state trajectories are bounded such that $\left\Vert Z\left(t\right)\right\Vert \in B_{r}$ for all $t\in\mathbb{R}_{\geq T}$. Since the estimates $\hat{W}_{a}$ approximate the ideal weights $W$, the policy $\hat{u}$ approximates the optimal policy $u^{*}$.
\end{proof}
\end{appendix}
\end{document}